\title{Anticipatory Fictitious Play}
\author[12]{Alex Cloud\footnote{Affiliations refer to where this work was carried out and are no longer current.}}
\author[1]{Albert Wang\footnote{Corresponding author, email: alwang@alumni.stanford.edu.}}
\author[1]{Wesley Kerr}
\affil[1]{Riot Games Technology Research, Los Angeles, CA}
\affil[2]{North Carolina State University, Raleigh, NC}
\date{December 2, 2022}
\theoremstyle{definition}
\newtheorem{thm}{Theorem}
\newtheorem{prop}{Proposition}
\newtheorem*{remark*}{Remark}
\newtheorem{lemma}{Lemma}
\newtheorem*{lemma*}{Lemma}
\newtheorem{defn}{Definition}
\newcommand{\br}{\texttt{BR}}
\newcommand{\xbar}{\overline{x}}
\newcommand{\ybar}{\overline{y}}
\newcommand{\Proj}{\textnormal{Proj}}
\newcommand{\FP}{\textnormal{FP}}
\newcommand{\AFP}{\textnormal{AFP}}
\newcommand{\R}{\mathbb{R}} 
\newcommand{\N}{\mathbb{N}} 
\newcommand{\T}{\intercal} 
\newcommand{\ind}{\mathds{1}} 
\newcommand{\Prob}{\mathcal{P}} 
\newcommand{\A}{\mathcal{A}} 
\newcommand{\St}{\mathcal{S}} 
\newcommand{\Obs}{\mathcal{O}} 
\newcommand{\X}{\mathcal{X}}
\newcommand{\Hist}{\mathcal{H}} 
\begin{document}



\maketitle

\begin{abstract}
Fictitious play is an algorithm for computing Nash equilibria of matrix games. Recently, machine learning variants of fictitious play have been successfully applied to complicated real-world games. This paper presents a simple modification of fictitious play which is a strict improvement over the original: it has the same theoretical worst-case convergence rate, is equally applicable in a machine learning context, and enjoys superior empirical performance. We conduct an extensive comparison of our algorithm with fictitious play, proving an optimal $O(t^{-1})$ convergence rate for certain classes of games, demonstrating superior performance numerically across a variety of games, and concluding with experiments that extend these algorithms to the setting of deep multiagent reinforcement learning. 
\end{abstract}


\section{Introduction}
Matrix games (also known as {\em normal-form games}) are an abstract model for interactions between multiple decision makers. Fictitious play (FP) (\citet{brown1951iterative}) is a simple algorithm for two-player matrix games. In FP, each player starts by playing an arbitrary strategy, then proceeds iteratively by playing the best strategy against the empirical average of what the other has played so far. In some cases, such as two-player, zero-sum games, the empirical average strategies will converge to a Nash equilibrium.

Although there are more efficient algorithms for computing Nash equilibria in matrix games \citep{adler2013equivalence, shoham2008multiagent}, there are a few reasons why fictitious play remains a topic of interest. First, it serves as a model for how humans might arrive at Nash equilibria in real-world interactions \citep{luce1989games, brown1951iterative, conlisk1993adaptation}. Second, FP is extensible to real-world games which are large and complicated. Our work is primarily motivated by the secondary application.

\sloppy The initial step towards extending FP to real-world games was by \citet{kuhn1953extensive}, which established the equivalence of normal-form games (represented by matrices) and extensive-form games (represented by trees with additional structure). Loosely speaking, this means that results which apply for matrix games may also apply to much more complicated decision making problems, such as ones that that incorporate temporal elements or varying amounts of hidden information.
Leveraging this equivalence, \citet{heinrich2015fictitious} proposed an extension of FP to the extensive-form setting, full-width extensive-form fictitious play (XFP), and proved that it converges to a Nash equilibrium in two-player, zero-sum games. \citet{heinrich2015fictitious} also proposed Fictitious Self Play (FSP), a machine learning approximation to XFP. In contrast to XFP, which is intractable for real-world games whose states cannot be enumerated in practice, FSP relies only on basic operations which can be approximated in a machine learning setting, like averaging (via supervised learning) and computing best responses (via reinforcement learning). In this way, FSP provides a version of fictitious play suitable for arbitrarily complex two-player, zero-sum games. Not long after the introduction of FSP, \citet{lanctot2017unified} presented Policy Space Response Oracles (PSRO), a general framework for fictitious-play-like reinforcement learning algorithms in two-player, zero-sum games. These ideas were employed as part of the groundbreaking AlphaStar system that defeated professional players at StarCraft II  \citep{vinyals2019grandmaster}.

We introduce anticipatory fictitious play (AFP), a simple variant of fictitious play which is also reinforcement-learning-friendly. In contrast to FP, where players iteratively update to exploit an estimate of the opponent’s strategy, players in AFP update proactively to respond to the strategy that the opponent would use to exploit them.

We prove that AFP is guaranteed to converge to a Nash equilibrium in two-player zero-sum games and establish an optimal convergence rate for two classes of games that are of particular interest in learning for real world games \citep{balduzzi2019open}, a class of ``cyclic'' games and a class of ``transitive'' games. Numerical comparisons suggest that in AFP eventually outperforms FP on virtually any game, and that its improvement over FP improves as games get larger. Finally, we propose a reinforcement learning version of AFP that is implementable as a one-line modification of an RL implementation of FP, such as FSP. These algorithms are applied to Running With Scissors \citep{vezhnevets2020options}, a stochastic, competitive multiagent environment with cyclic dynamics. 

\subsection{Related work}
Aside from the literature on fictitious play and its extension to reinforcement learning, there has been substantial work on ``opponent-aware'' learning algorithms. These algorithms incorporate information about opponent updates and are quite similar to anticipatory fictitious play.

In the context of evolutionary game theory, \citet{conlisk1993adaptation} proposed an ``extrapolation process,'' whereby two players in a repeated game each forecast their opponents' strategies and respond to those forecasts. Unlike AFP, where opponent responses are explicitly calculated, the forecasts are made by linear extrapolation based on the change in the opponent's strategy over the last two timesteps. \citet{conlisk1993adaptive} proposed two types of ``defensive adaptation,'' which are quite similar in spirit to AFP but differ in some important details; most importantly, while they consider the opponent's empirical payoffs at each step, they do not respond directly to what the opponent is likely to play given those payoffs.

\citet{shamma2005dynamic} proposed derivative action fictitious play, a variant of fictitious play in the continous time setting in which a best response to a forecasted strategy is played, like in \citep{conlisk1993adaptation}. The algorithm uses a derivative-based forecast that is analogous to the discrete-time anticipated response of AFP. However, their convergence results rely on a fixed, positive entropy bonus that incentivizes players to play more randomly, and they do not consider the discrete-time case.

\citet{zhang2010multi} proposed Infinitesimal Gradient Ascent with Policy Prediction, in which two policy gradient learning algorithms continuously train against a forecast of the other's policy. Their algorithm represents the core idea of AFP, albeit implemented in a different setting. However, their proof of convergence is limited to $2x2$ games. \citet{foerster2018learning} and \citet{letcher2018stable} take this idea further, modifying the objective of a reinforcement learning agent so that it accounts for how changes in the agent will change the anticipated {\em learning} of the other agents. This line of research is oriented more towards equilibrium finding in general-sum games (e.g. social dilemmas), and less on efficient estimation of equilibria in strictly competitive two-player environments.


\section{Preliminaries}

A  (finite) {\em two-player zero-sum game} (2p0s game) is represented by a matrix $A \in \R^{m \times n}$, so that when player 1 plays $i$ and player 2 plays $j$, the players observe payoffs $(A_{i,j},-A_{i,j})$ respectively. Let $\Delta^k \subset \R^k$ be the set of probability vectors representing distributions over $\{1,\dots,k\}$ elements. Then a {\em strategy} for player 1 is an element $x \in \Delta^m$ and similarly, a strategy for player 2 is an element $y \in \Delta^n$. 

A {\em Nash equilibrium} in a 2p0s game $A$ is a pair of strategies $(x^*, y^*)$ such that each strategy is optimal against the other, i.e., 
\begin{align*}
    x^* \in \underset{x \in \Delta^m}{\arg\max} \; x^\T A y^* \quad \text{and} \quad
    y^* \in \underset{y \in \Delta^n}{\arg\min} \; (x^*)^\T A y.
\end{align*}
The Nash equilibrium represents a pair of strategies that are ``stable'' in the sense that no player can earn a higher payoff by changing their strategy. At least one Nash equilibrium is guaranteed to exist in any finite game \citep{nash1950equilibrium}. 

Nash equilibria in 2p0s games enjoy a nice property not shared by Nash equilibria in general: in 2p0s games, if $(x_1,y_1)$ and $(x_2,y_2)$ are Nash equilibria, then $(x_2,y_1)$ is a Nash equilibrium. In a 2p0s game, we define a Nash {\em strategy} to be be one that occurs as part of some Nash equilibrium. Note that the aforementioned property does not hold in general, so normally it is only valid to describe collections of strategies (one per player) as equilibria.

A {\em solution} to a 2p0s game $A$ is a pair of strategies $(x^*, y^*)$ such that
\begin{align*}
    \min_{y \in \Delta^n} (x^*)^\T A y \leq (x^*)^\T A y^* \leq \max_{x \in \Delta^m} x^\T A y^*.
\end{align*}
We say $v^* = (x^*)^\T A y^*$, which is unique, the {\em value} of the game. Nash equilibria are equivalent to solutions of 2p0s games \citep{shoham2008multiagent}, which is why we use the same notation. Finally, the {\em exploitability} of a strategy is the difference between the value of the game and the worst-case payoff of that strategy. So the exploitability of $x \in \Delta^m$ is $v^* - \min x^\T A$, and the exploitability of $y \in \Delta^n$ is $\max Ay - v^*$.

\subsection{Fictitious play}
Let $e_1, e_2, \dots$ denote the standard basis vectors in $\R^m$ or $\R^n$. Let $\br^k_A$ be the best response operator for player $k$, so that
\begin{align*}
    (\forall y \in \Delta^n) \;\;\; &\br^1_A(y) = \{e_i \in \R^m : i \in \arg\max A y\}; \\
    (\forall x \in \Delta^m) \;\;\; &\br^2_A(x) = \{e_j \in \R^n : j \in \arg\min x^\T A \}.
\end{align*}
Fictitious play is given by the following process. Let $x_1=\xbar_1=e_i$ and $y_1=\ybar_1=e_j$ be initial strategies for some $i$, $j$. For each $t \in \N$, let
\begin{align*}
    x_{t+1} &\in \br^1_A(\ybar_{t});  &y_{t+1} &\in \br^2_A(\xbar_{t}); \\
    \xbar_{t+1} &= \frac{1}{t+1} \sum_{k=1}^{t+1} x_t; &\ybar_{t+1} &= \frac{1}{t+1} \sum_{k=1}^{t+1} y_t.
\end{align*}
In other words, at each timestep $t$, each player calculates the strategy that is the best response to their opponent's average strategy so far. \citet{robinson1951iterative} proved that the pair of average strategies $(\xbar_t, \ybar_t)$ converges to a solution of the game by showing that the exploitability of both strategies converge to zero. 

\begin{thm} \label{thm:FP_converges} (Robinson, 1951) If $\{(x_t,y_t)\}_{t \in \N}$ is a FP process for a 2p0s game with payoff matrix $A \in \R^{m \times n}$, then
\begin{align*}
\lim_{t \rightarrow \infty} \min \xbar_t^\T A = \lim_{t \rightarrow \infty} \max A \ybar_t = v^*, 
\end{align*}
where $v^*$ is the value of the game. Furthermore, a bound on the rate of convergence is given by
\begin{align*}
    \max A \ybar_t - \min \xbar_t^\T A &= O(t^{-1/(m+n-2)}) \text{ for all } t \in \N,
\end{align*}
where $a = \max_{i,j} A_{i,j}$.
\end{thm}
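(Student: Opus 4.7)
The plan is to follow Robinson's classical argument, establishing both convergence and the rate simultaneously by induction on the game size $m+n$.

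First, I would reduce everything to controlling a single ``duality gap.'' Working with the unnormalized cumulative strategies $X_t = t\xbar_t$ and $Y_t = t\ybar_t$, define
\begin{align*}
u(t) = \max_i (AY_t)_i \quad \text{and} \quad l(t) = \min_j (X_t^\T A)_j.
\end{align*}
By the minimax theorem, $v^* = \max_{x \in \Delta^m} \min_j (x^\T A)_j = \min_{y \in \Delta^n} \max_i (Ay)_i$, so dividing $X_t$ and $Y_t$ by $t$ yields $l(t)/t \leq v^* \leq u(t)/t$ for every $t$. Consequently the exploitabilities are both sandwiched by the gap $\Delta(t) := u(t)/t - l(t)/t = \max A\ybar_t - \min \xbar_t^\T A$, and it suffices to prove $\Delta(t) = O(t^{-1/(m+n-2)})$, which simultaneously yields convergence to $v^*$ and the claimed rate.

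Second, I would state and prove the key technical lemma by induction on $m+n$: for any process that obeys the FP update rule on the interval $[t_1, t_2]$ starting from arbitrary cumulative vectors $X_{t_1}, Y_{t_1}$,
\begin{align*}
[u(t_2) - l(t_2)] - [u(t_1) - l(t_1)] \leq K_{m+n} \cdot a \cdot (t_2 - t_1)^{\,1 - 1/(m+n-2)},
\end{align*}
where $K_{m+n}$ is a constant depending only on $m+n$. The base case $m = n = 1$ is immediate since each player has exactly one strategy and the gap stays fixed. For the inductive step, I would partition $[t_1, t_2]$ into sub-intervals based on when particular actions enter or leave the best-response set. By a pigeonhole argument, any interval of length $\tau$ can be split so that over each sub-interval a fixed row or column is never a best response and can therefore be ignored; within the sub-interval, the FP dynamics coincide with a FP process on a strictly smaller matrix (of size $(m-1)\times n$ or $m\times(n-1)$), but with nonzero initial cumulative vectors inherited from $X_{t_1}, Y_{t_1}$. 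Applying the inductive hypothesis to each such sub-process and summing across the partition yields the stated bound, where the exponent $1 - 1/(m+n-2)$ arises from optimizing the sub-interval lengths against the inductive rate.

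Third, the theorem follows by applying the lemma with $t_1 = 1$ and $t_2 = t$ (noting $u(1) - l(1) \leq 2a$) to obtain $u(t) - l(t) = O(t^{1 - 1/(m+n-2)})$, hence $\Delta(t) = O(t^{-1/(m+n-2)})$. The convergence statement $\lim_t \min \xbar_t^\T A = \lim_t \max A\ybar_t = v^*$ is then immediate from the sandwich. The main obstacle is the inductive step: one must carefully formulate the lemma so that it accommodates \emph{arbitrary} initial cumulative vectors $X_{t_1}, Y_{t_1}$ (not just zero), since this is precisely what allows the projected sub-processes to be treated as FP processes on smaller games. The bookkeeping — verifying that the partition points correspond to best-response transitions, that discarded rows/columns truly do not affect the gap within a sub-interval, and that the constants $K_{m+n}$ only blow up polynomially — is the delicate combinatorial heart of Robinson's proof, but once the lemma is set up correctly the exponent $1/(m+n-2)$ falls out mechanically from the recursion.
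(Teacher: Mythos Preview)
Your overall strategy---work with unnormalized cumulative payoffs, bound the duality gap, and induct on $m+n$ by projecting to smaller games---is exactly Robinson's, and it is also the route the paper takes (Appendix~B and B.1, stated there for the more general perturbed FP systems). So you are on the right track.

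However, your description of the inductive step contains a genuine gap. You claim that ``by a pigeonhole argument, any interval of length $\tau$ can be split so that over each sub-interval a fixed row or column is never a best response.'' This is false: in a game whose equilibrium has full support, every row and every column will eventually be a best response over any sufficiently long interval, so no such partition exists. Robinson's argument (and the paper's Lemmas~2--3) handles this by a \emph{dichotomy}: on each sub-interval of a fixed length $t^*$, either some row or column is never eligible (and one applies the inductive hypothesis to the projected game, yielding an increment bound on the gap as you describe), \emph{or} all rows and columns are eligible, in which case a separate range argument gives an \emph{absolute} bound $\max V - \min U \le 4a(t^*+1)$ on the gap at the end of that sub-interval. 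One then locates the last ``all-eligible'' sub-interval, uses the absolute bound there to reset, and applies the increment bound over the remaining sub-intervals. Without the all-eligible case your induction simply does not close.

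A secondary point: the paper's formulation of the key lemma is not your increment inequality but rather ``for every $\epsilon$ there exists $t^*(\epsilon;m,n)$ with $\max V(t)-\min U(t)<\epsilon t$ for $t\ge t^*$,'' together with the recursion $t^*(\epsilon;m,n)=(16a/\epsilon)\,t^*(\epsilon;m-1,n)$ extracted from the inductive step. Solving the recursion gives $t^*(\epsilon;m,n)=(16a/\epsilon)^{m+n-2}$ and then inverting for $\epsilon$ yields the rate. Your increment-style lemma can be made to work, but only after you supply the missing all-eligible bound; once you do, you will find the recursion and the exponent emerge in essentially the same way.
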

(Robinson did not explicitly state the rate, but it follows directly from her proof, as noted in \citet{daskalakis2014counter} and explicated in our Appendix \ref{appendix:fp_worst_case_convergence_rate}.) 


\section{Anticipatory Fictitious Play}
Although FP converges to a Nash equilibrium in 2p0s games, it may take an indirect path. For example, in Rock Paper Scissors with tiebreaking towards the minimum strategy index, the sequence of average strategies $\{\xbar_1,\xbar_2,\dots\}$ orbits the Nash equilibrium, slowly spiraling in with decreasing radius, as shown on the left in Figure \ref{fig:strategy_space_rps}. This tiebreaking scheme is not special; under random tiebreaking, the path traced by FP is qualitatively the same, resulting in slow convergence with high probability, as shown in Figure \ref{fig:performance_rps}.

\begin{figure}[ht]
    \centering
    \includegraphics[width=0.75\textwidth]{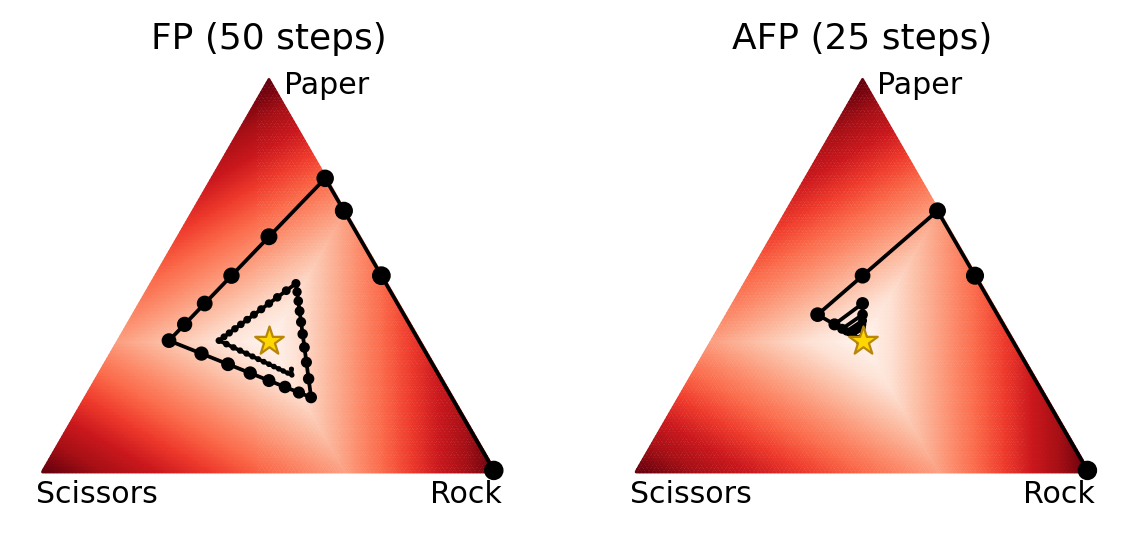}
     \caption[The first 50 steps of FP and AFP on Rock Paper Scissors]{A visualization of the first 50 steps of FP $(\xbar_1^\FP, \xbar_2^\FP, \dots, \xbar_{50}^\FP)$ and first 25 steps of AFP $(\xbar_1^\AFP, \xbar_2^\AFP, \dots, \xbar_{25}^\AFP)$ on Rock Paper Scissors. This corresponds to an equal amount of computation per algorithm (50 best responses). Ties between best response strategies were broken according to the ordering `Rock,' `Paper,' `Scissors.'  The Nash equilibrium is marked by a star. The shading indicates the exploitability of the strategy at that point, with darker colors representing greater exploitability.}
    \label{fig:strategy_space_rps}
\end{figure}

\begin{figure}[t]
    \centering
    \includegraphics[width=0.7\textwidth]{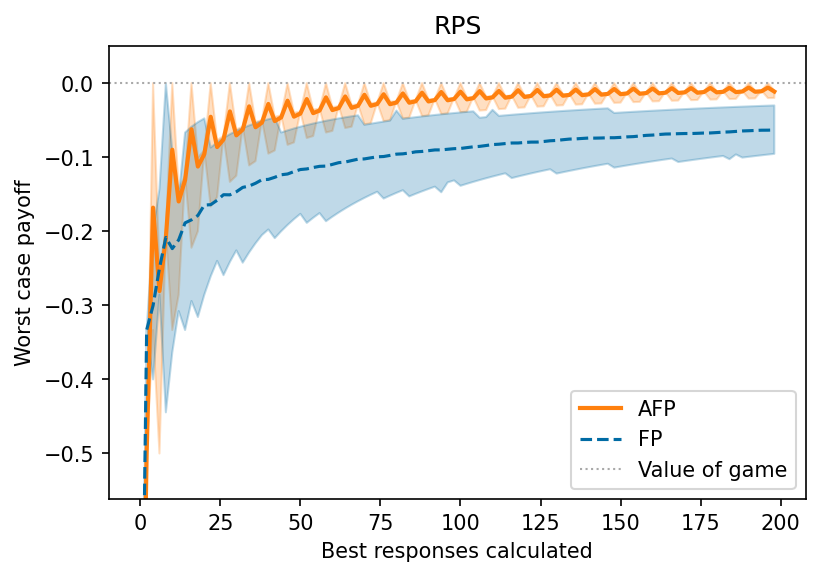}
    \caption[Comparison of FP and AFP performance on RPS with random tiebreaking.]{Comparison of FP and AFP performance ($\min \xbar_t^\T A$) on RPS with random tiebreaking. The highlighted region depicts the 10th and 90th percentiles across 10,000 runs. All variation is due to randomly sampled tiebreaking. The value of the game is $v^*=0$.}
    \label{fig:performance_rps}
    
\end{figure}

Given that FP appears to do an especially poor job of decreasing exploitability in the case above, we consider alternatives. Inspired by gradient descent, we ask if there is there a way to follow the gradient of exploitability towards the Nash equilibrium (without explicitly computing it, as is done in \citet{lockhart2019computing}). By definition, the best response to the average strategy is a strategy that maximally exploits the average strategy. So, a natural choice of update to reduce exploitability is to move the average strategy in a direction that counters this best response.

To this end, we propose {\em anticipatory fictitious play} (AFP), a version of fictitious play that ``anticipates'' the best response an adversary might play against the current strategy, and then plays the best response to an average of that and the adversary's current average strategy. (Simply responding directly to the opponent's response does not work; see Appendix \ref{sec:naive_afp}.) Alternatively, one can think of AFP as a version of FP that ``forgets'' every other best response it calculates. This latter interpretation enables a convenient implementation of AFP as a modification of FP as demonstrated in Algorithm \ref{alg:fp_afp_rl}.

\begin{remark*}
The AFP update is seemingly consistent with human psychology: it is quite intuitive to imagine how an adversary might try to exploit oneself and to respond in order to best counter that strategy. Given that fictitious play provides a model for how humans or other non-algorithmic decision makers might arrive at an equilibrium in practice \citep{luce1989games, brown1951iterative} anticipatory fictitious play offers a new model for how this may occur. We leave further consideration of this topic to future work.
\end{remark*}

AFP is given by the following process. For some $i \in \{1,\dots,m\}$ and $j \in \{1,\dots,n\}$, let $x_1 = \xbar_1 = e_i$ and $y_1 = \ybar_1 = e_j$ be initial strategies for each player. For each $t \in \N$, define
\begin{align}
    x'_{t+1} &\in \br^1_A(\ybar_{t}); & y'_{t+1} &\in \br^2_A(\xbar_{t}); \nonumber \\
    \xbar'_{t+1} &= \tfrac{t}{t+1} \xbar_{t} + \tfrac{1}{t+1} x'_{t+1}; & \ybar'_{t+1} &= \tfrac{t}{t+1} \ybar_{t} + \tfrac{1}{t+1} y'_{t+1}; \nonumber \\
    x_{t+1} &\in \br^1_A(\ybar'_{t+1}) ; & y_{t+1} &\in \br^2_A(\xbar'_{t+1}) ; \nonumber \\
    \xbar_{t+1} &= \frac{1}{t+1} \sum_{k=1}^{t+1} x_t; & \ybar_{t+1} &= \frac{1}{t+1} \sum_{k=1}^{t+1} y_t. \label{eqn:afp_update}
\end{align}
Here, $x'_{t+1}$ and $y'_{t+1}$ are the best response to the opponent's average strategy. They are the strategies that FP would have played at the current timestep. In AFP, each player ``anticipates'' this attack and defends against it by calculating the opponent's average strategy that include this attack ($\xbar'_t$ and $\ybar'_t$), and then playing the best response to the anticipated average strategy of the opponent.

In Figure \ref{fig:strategy_space_rps}, we see the effect of anticipation geometrically: AFP ``cuts corners,'' limiting the extent to which it overshoots its target. In contrast, FP aggressively overshoots, spending increasingly many steps playing strategies that take it further from its goal. The effect on algorithm performance is pronounced, with AFP hovering near equilibrium while FP slowly winds its way there.

Of course, RPS is a very specific example. It is natural to wonder: is AFP good in general? The rest of the paper seeks to answer that question. We begin by proving that AFP converges to a Nash equilibrium.

\begin{prop} \label{prop:AFP_converges}
If $\{(x_t,y_t)\}$ is an AFP process for a 2p0s game with payoff matrix $A \in \R^{m \times n}$, the conclusion of Theorem \ref{thm:FP_converges} holds for this process. Namely, AFP converges to a Nash equilibrium, and it converges no slower than the rate that bounds FP.
\end{prop}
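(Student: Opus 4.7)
The plan is to view AFP as a perturbed fictitious play process and then to adapt the proof of Theorem \ref{thm:FP_converges} to handle the perturbation.

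First I would unfold the AFP recursion to isolate the essential perturbation. By construction $x_{t+1} \in \br^1_A(\ybar'_{t+1})$ where
\[
\ybar'_{t+1} = \tfrac{t}{t+1}\ybar_{t} + \tfrac{1}{t+1} y'_{t+1},
\]
which lies within total-variation distance $2/(t+1)$ of the FP-relevant average $\ybar_t$. Since the entries of $A$ are bounded, this implies that $x_{t+1}$ is an $O(1/t)$-approximate best response to $\ybar_t$: $x_{t+1}^\T A \ybar_t \geq \max_{x \in \Delta^m} x^\T A \ybar_t - C/t$ for some constant $C$ depending only on $A$, and the analogous statement holds on the $y$-side. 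Thus an AFP process is a fictitious play process in which the exact best responses have been replaced by approximate best responses whose per-step error decays like $O(1/t)$.

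Next I would extend the proof of Theorem \ref{thm:FP_converges} to allow $O(1/t)$-approximate best responses at each step. Robinson's argument proceeds by an induction over sub-matrices of $A$ that controls, via a recursive bound, a combinatorial quantity equivalent to the duality gap $\max A \ybar_t - \min \xbar_t^\T A$. The modification needed is to carry an additive perturbation term through her recursion. Since the accumulated perturbation $\sum_{k=1}^{t} O(1/k) = O(\log t)$ is of lower order than the polynomial factors $O(t^{(m+n-3)/(m+n-2)})$ appearing in her inductive bounds, the asymptotic rate $O(t^{-1/(m+n-2)})$ is preserved. This is the main technical obstacle, as it requires a careful audit of the induction to confirm that the perturbation terms integrate cleanly through each level and do not accumulate into a constant factor large enough to degrade the rate.

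Once the duality-gap bound is established, the remaining claims of the proposition follow as in the FP case: the exploitabilities $v^* - \min \xbar_t^\T A$ and $\max A \ybar_t - v^*$ both vanish at rate $O(t^{-1/(m+n-2)})$, and $(\xbar_t, \ybar_t)$ therefore converges to a solution of the game at no slower rate than the bound in Theorem \ref{thm:FP_converges}.
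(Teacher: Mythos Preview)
Your opening observation is exactly the paper's: AFP responds to a target that differs from $\ybar_t$ by $O(1/t)$ in total variation, so each AFP move is an approximate best response to the FP target. But the right place to exploit this is in Robinson's \emph{unnormalized} coordinates $V(t)=tA\ybar_t$, $U(t)=tA^\T\xbar_t$. There your $O(1/t)$ payoff error becomes a \emph{uniformly bounded} perturbation: $x_{t+1}$ maximizes $V(t)+E_V(t)$ with $\|E_V(t)\|_\infty\le a:=\max_{i,j}|A_{i,j}|$, independent of $t$. In these coordinates there is nothing to accumulate, and no $\log t$ appears.

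The gap is in your second paragraph. Robinson's induction does not proceed by summing per-step best-response errors, so the quantity $\sum_{k\le t}O(1/k)=O(\log t)$ has no natural home in her argument. Her proof hinges on a combinatorial notion of a row or column being \emph{eligible} on an interval (achieving the extremum of $V$ or $U$ there); the inductive step deletes an ineligible row and restarts the system on the resulting submatrix. The perturbation breaks eligibility, not an additive error budget: the played index need only be near the maximum of $V(t)$, not equal to it. The paper's fix is to relax eligibility to ``$E$-eligibility'' (within $2a$ of the extremum), after which Robinson's Lemmas~2--4 go through with $2at$ replaced by $2a(t+1)$ --- a constant shift that leaves the rate $O(t^{-1/(m+n-2)})$ untouched. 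The one nontrivial check is that restricting to a submatrix yields another perturbed-FP system with the \emph{same} bound $a$, so the induction closes.

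So your plan is sound at the level of ``AFP is perturbed FP; redo Robinson,'' but the mechanism you propose for threading the perturbation through the induction is the wrong one. Switch to unnormalized coordinates, relax eligibility rather than track an error sum, and the argument goes through with only constant changes and no logarithmic loss.
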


\begin{proof}(Idea) Generalize the original proof of Theorem \ref{thm:FP_converges}. We work with accumulating payoff vectors $U(t) = t A^\T \xbar_t$ and $V(t) = t A \ybar_t$. In the original proof, a player 1 strategy index $i \in \{1,\dots,m\}$ is called {\em eligible} at time $t$ if $i \in \arg \max V(t)$ (similarly for player 2).  We replace eligibility with the notion of {\em $E$-eligibility}, satisfied by an index $i \in \arg\max [V(t) + E]$, for any $E \in \R^m$ with $\| E \|_\infty < \max_{i,j} |A_{i,j}|$. Essentially, an index is $E$-eligible if it corresponds to a best response to a perturbation of the opponent's history $\ybar_t$ or a perturbation of the game itself. The original proof structure can be preserved in light of this replacement, requiring only minor modifications to some arguments and new constants. Treating the in-between strategies in AFP, $\xbar'_t$ and $\ybar'_t$, as perturbations of $\xbar_t$ and $\ybar_t$, it follows that AFP satisfies the conditions for the generalized result. A complete proof is given in Appendix \ref{appendix:proof_of_convergence}. 
\end{proof}

\section{Application to normal form games}

Proposition \ref{prop:AFP_converges} establishes that AFP converges and that AFP's worst-case convergence rate satisfies the same bound as FP's, where the worst-case is with respect to games and tiebreaking rules. The next proposition shows that for two classes of games of interest, AFP not only outperforms FP, but attains an optimal rate. In both classes, our proofs will reveal that AFP succeeds where FP fails because AFP avoids playing repeated strategies. The results hold for general applications of FP and AFP rather than relying on specific tiebreaking rules. 

The classes of games that we analyze are intended to serve as abstract models of two fundamental aspects of real-world games: transitivity (akin to ``skillfullness;'' some ways of acting are strictly better than others) and nontransitivity (most notably, in the form of strategy cycles like Rock < Paper < Scissors < Rock). Learning algorithms for real-world games must reliably improve along the transitive dimension while accounting for the existence of strategy cycles; see \citet{balduzzi2019open} for further discussion.

For each integer $n \geq 3$, define payoff matrices $C^n$ and $T^n$ by
\begin{align*}
    C^n_{i,j} &= \begin{cases}
    \hphantom{-}1 &\text{if } i=j+1 \mod n; \\
    -1 &\text{if } i=j-1 \mod n; \\
    \hphantom{-}0 &\text{otherwise;}
    \end{cases} 
    \;\;\text{ and } \;\;
 T^n_{i,j} &= \begin{cases}
    \hphantom{-}(n-i+2)/n &\text{if } i=j+1; \\
    -(n-i+2)/n &\text{if } i=j-1; \\
    \hphantom{-(n-}\;0 &\text{otherwise,}
    \end{cases}
\end{align*}
for $i, j \in \{1,\dots,n\}$. The game given by $C^n$ is a purely cyclic game: each strategy beats the one before it and loses to the one after it; $C^3$ is the  game Rock Paper Scissors. For each $C^n$, a Nash equilibrium strategy is $[n^{-1}, \dots, n^{-1}]^\T$. The game given by $T^n$ could be considered ``transitive:'' each strategy is in some sense better than the last, and $[0, \dots, 0, 1]^\T$ is a Nash equilibrium strategy. The payoffs are chosen so that each strategy $i$ is the unique best response to $i-1$, so that an algorithm that learns by playing best responses will progress one strategy at a time rather than skipping to directly to strategy $n$ (as would happen if FP or AFP were applied to a game that is transitive in a stronger sense, such as one with a single dominant strategy; c.f. the definition of transitivity in \citep{balduzzi2019open}). Note: $T^n$ could be defined equivalently without the $n^{-1}$ factor, but this would create a spurious dependence on $n$ for the rates we derive.

The following proposition establishes a convergence rate of $O(t^{-1})$ for AFP applied to $C^n$ and $T^n$. This rate is optimal within the class of time-averaging algorithms, because the rate at which an average changes is $t^{-1}$. Note: we say a random variable $Y_t = \Omega_p(g(t))$ if, for any $\epsilon > 0$, there exists $c > 0$ such that $P[Y_t < c g(t)] < \epsilon$ for all $t$.

\begin{prop} \label{prop:comparison_on_transitive_and_cyclic}
FP and AFP applied symmetrically to $C^n$ and $T^n$ obtain the rates given in Table \ref{tab:convergence rates}. In particular, if $\{x_t, x_t\}_{t\in\N}$ is an FP or AFP process for a 2p0s game with payoff matrix $G \in \{C^n, T^n\}$ with tiebreaking as indicated, then $\max G \xbar_t = R(t).$ Tiebreaking refers to the choice of $x_{t+1} \in \arg\max \br^1_G(\xbar_t)$ when there are multiple maximizers. The ``random'' tiebreaking chooses between tied strategies independently and uniformly at random. For entries marked with ``arbitrary'' tiebreaking, the convergence rate holds no matter how tiebreaks are chosen.
\end{prop}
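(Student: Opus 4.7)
The plan is to verify each of the four entries in Table~\ref{tab:convergence rates} --- (FP, $C^n$), (FP, $T^n$), (AFP, $C^n$), (AFP, $T^n$) --- by direct analysis of the sequence of best responses produced. Both $C^n$ and $T^n$ are anti-symmetric, so with symmetric initialization and symmetric tiebreaking one can verify that $\xbar_t = \ybar_t$ for all $t$; this reduces each case to tracking a single $\Delta^n$-valued trajectory and computing $\max G \xbar_t$ directly.

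For the cyclic game $C^n$, I first observe that $\br^1_{C^n}(\ybar)$ selects $e_i$ with $i \in \arg\max_i (\ybar_{i-1} - \ybar_{i+1})$ (indices taken mod $n$), so the best response depends only on the cyclically adjacent differences of $\ybar$. Under FP, I will show by induction on the revolution count that play sticks on each pure strategy $e_k$ for a number of consecutive steps that grows with the revolution, so $\xbar_t$ orbits the uniform distribution with slowly decaying radius --- this is the analytic counterpart of the spiral in Figure~\ref{fig:strategy_space_rps} and yields the FP rate listed in the table. Under AFP, the anticipation step $\ybar'_{t+1}$ effectively previews one extra best-response update on the opponent side, which shifts the selected index one position forward in the cycle; I will then show by induction that after a short warmup AFP plays the $n$ pure strategies in balanced rotation, so that each component of $\xbar_t$ equals $1/n + O(1/t)$, giving the optimal $O(t^{-1})$ exploitability.

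For the transitive game $T^n$, the key quantity is the switching time at which the best response advances from $e_i$ to $e_{i+1}$. Using $T^n_{i+1,i} = (n-i+1)/n$, a direct calculation gives the threshold on the relative weights of $e_{i-1}$ and $e_i$ in $\ybar_t$ at which $e_{i+1}$ overtakes $e_i$ as best response. For FP, this threshold produces a recursion whose solution yields the FP rate in the table, with rates degrading along the chain because later margins in $T^n$ are smaller. For AFP, the anticipation step previews an additional update and thereby roughly halves each switching time, producing the $O(t^{-1})$ rate.

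The principal obstacle will be the careful bookkeeping in the AFP inductions: on $C^n$, I must verify that the paired (anticipation, commit) updates remain in phase so that play genuinely rotates through all $n$ strategies cleanly, since a single miscounted step would destroy the $O(t^{-1})$ bound. A secondary difficulty is handling tiebreaking: for entries marked ``arbitrary,'' the induction must go through under any selection from $\arg\max$, which precludes any appeal to a strict margin; for entries marked ``random,'' a short concentration argument bounding the probability of many consecutive unfavorable tie resolutions will convert the pathwise claim to the $\Omega_p$ form.
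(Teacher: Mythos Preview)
Your FP analyses are broadly on the right track, but both AFP arguments rest on mechanisms that do not actually hold, and in each case the gap is rate-determining rather than cosmetic.

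\textbf{AFP on $C^n$.} The claim that after a warmup AFP ``plays the $n$ pure strategies in balanced rotation'' is false under arbitrary tiebreaking, which is exactly the regime the table requires. Even for $n=3$, ties appear almost immediately (e.g.\ after $x_1=e_1,\ x_2=e_2$ one has $\xbar'_3$ with two maximizing indices), and different resolutions lead to genuinely different trajectories, not a clean cycle. The paper does not attempt to control the trajectory; instead it works with the unnormalized payoff vector $\Delta_t := t\,C^n\xbar_t$ and shows, by a short case analysis, that $\max_j \Delta_{t,j}$ can never exceed $2$ when $n\in\{3,4\}$: incrementing the max from $m$ to $m+1$ under AFP would require two non-adjacent entries equal to $m$ with an $m{-}1$ between them, and this configuration is ruled out by the zero-sum constraint ($n=3$) or the identity $\Delta_t=[a,b,-a,-b]$ ($n=4$). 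This is why the result is stated only for $n=3,4$; the paper explicitly leaves general $n$ as a conjecture, so an argument purporting to give clean rotation for all $n$ should already be suspect.

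\textbf{AFP on $T^n$.} ``Roughly halving each switching time'' cannot produce the $O(t^{-1})$ rate: the FP switching times on $T^n$ grow at least linearly in the index (this is essentially your own FP analysis), so halving still gives $\tau_k=\Theta(k^2)$ and hence only $\Omega(t^{-1/2})$. The actual mechanism is qualitatively stronger: one shows by induction that $x_t=e_{\min(t,n)}$, i.e.\ AFP advances exactly one index per step and never repeats. The anticipation step is what makes this work --- when $\xbar_\tau$ is uniform on $\{1,\dots,\tau\}$, the FP response $x'_{\tau+1}$ lands on $e_\tau$, and the extra $\tfrac{1}{\tau+1}$ weight this places on index $\tau$ in $\xbar'_{\tau+1}$ is enough to push the maximizer of $T^n\xbar'_{\tau+1}$ to index $\tau{+}1$. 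You need this one-step-per-index property, not a constant-factor speedup.
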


\begin{table}[htb]
    \centering
    \caption{Convergence rates for FP and AFP on $C^n$ and $T^n$.}
\begin{tabular}{ccccc}
    \toprule
    Algorithm & Game $G$ & Tiebreaking  & Rate $R(t)$ &Caveats  \\ \midrule
     FP  & $C^n$ & random & $\Omega_p(t^{-1/2})$  & \\ 
     AFP & $C^n$ & arbitrary & $O(t^{-1})$  &  $n=3,4$ \\
     FP & $T^n$  & arbitrary & $\Omega(t^{-1/2})$ & $t < t^*(n)$\\ 
     AFP & $T^n$ & arbitrary & $O(t^{-1})$  & \\
     \bottomrule
\end{tabular}
    
\label{tab:convergence rates}
\end{table}

\begin{proof}(Sketch, $C^n$) Full proofs of all cases are provided in Appendix \ref{appendix:convergence_rate_proofs}. Define $\Delta_0 = [0, \dots, 0]^\T \in \mathbb{Z}^n$ and $\Delta_t = t C^n \, \xbar_t$ for each $t \in \N$. The desired results are equivalent to $\max \Delta_t = O_p(\sqrt{t})$ under FP for the given tiebreaking rule, and $\max \Delta_t$ is bounded under AFP for $n=3,4$. Let $i_t$ be the index played by FP (AFP) at time $t$ (so $x_t = e_{i_t}$). It follows that
    \begin{align}
        \Delta_{t+1,j} = \begin{cases}
        \Delta_{t,j}-1 &\text{if } j=i_t-1 \mod n; \\ 
        \Delta_{t,j}+1 &\text{if } j=i_t+1 \mod n; \\
        \Delta_{t,j}  &\text{otherwise;}
        \end{cases} \label{eqn:delta_sequence_update}
    \end{align}
for each $t \in \N_0$ and $j \in \{1,\dots, n\}$. Note that the entries of $\Delta_t$ always sum to zero.

In the case of FP, it is easy to verify that $\max \Delta_t$ is nondecreasing for any choice of tiebreaking. For each $m \in \N_0$, define $t_m = \inf \{ t \in \N_0 : \max \Delta_t = m\}$. Then by Markov's inequality,
\begin{align*}
    P( \max \Delta_t < m) = P(t_m > t) \leq E(t_m)/t= \frac{1}{t} \sum_{k=0}^{m-1} E(t_{k+1} - t_{k}).
\end{align*}
Examining the timesteps at which $i_{t+1} \neq i_t$ and relating them to $\{t_k\}$, we show in the appendix that the time to increment the max from $k$ to $k+1$ satisfies $E(t_{k+1} - t_{k}) = O(k)$. Thus the bound above becomes $P(\max \Delta_t < m) \leq O(m^2)/t$. Now let $c \in \R^{\geq 0}$ be arbitrary and plug in $c \lceil \sqrt{t} \rceil$ for $m$, so we have $P(\max \Delta_t < c \sqrt{t}) \leq c^2 O(1) \rightarrow 0$ as $c \rightarrow 0$. So $\max \Delta_t = \Omega_p(\sqrt{t})$.

For the AFP case, consider the first timestep at which $\max \Delta_{t} = m+1$. Working backwards and checking cases, it can be shown that in order for the maximum value to increment from $m$ to $m+1$, there must first be a timestep where there are two non-adjacent entries of $m$ with an entry of $m-1$ between them. This cannot happen in the $n=3, m=2$ case because three positive entries (2,1,2) don't sum to zero. Similarly, in the $n=4, m=2$ case, it turns out by  \eqref{eqn:delta_sequence_update} that $\Delta_t = [a, b, -a, -b]$ for some $a$, $b$. So there cannot be three positive entries in this case either. Therefore $\max_t \Delta_t \leq 2$ for $n=3,4$.
\end{proof}

The proofs of Proposition \ref{prop:comparison_on_transitive_and_cyclic} establish a theme: FP can be slow because it spends increasingly large amounts of time progressing between strategies (playing $x_t = x_{t+1} = \dots = x_{t+k}$ with $k$ increasing as $t$ increases), whereas AFP avoids this. (Return to Figure \ref{fig:strategy_space_rps} for a visual example.)

Some further comments on the results: we only obtain the $O(t^{-1})$ rate for AFP applied to $C^n$ in the $n=3,4$ case. We conjecture that: (i) for a specific tiebreaking rule, AFP has the same worst-case rate as FP but with a better constant, (ii) under random tiebreaking, AFP is $O_p(t^{-1})$ for all $n$. This is reflected in numerical simulations for large $n$, as shown in Appendix \ref{appendix:additional_figures}.

Our results are noteworthy for their lack of dependence on tiebreaking: worst-case analyses of FP typically rely on specific tiebreaking rules; see \citet{daskalakis2014counter}, for example. As for the ``$t < t^*(n)$'' caveat for FP applied to $T^n$, this is an unremarkable consequence of analyzing a game with a pure strategy equilibrium (all probability assigned to a single strategy). We write $t^*(n)$ to indicate the first index at which FP plays $e_n$. Both FP and AFP will play $e_n$ forever some finite number of steps after they play it for the first time, thus attaining a $t^{-1}$ rate as the average strategy ``catches up'' to $e_n$. Our result shows that until this point, FP is slow, whereas AFP is always fast. As before, AFP's superior performance is reflected in numerical simulations, as shown in Appendix \ref{appendix:additional_figures}.

\subsection{Numerical results} \label{sec:numerical_results}

In order to compare FP and AFP more generally, we sample large numbers of random payoff matrices and compute aggregate statistics across them. Matrix entries are sampled as independent, identically distributed, standard Gaussian variables (note that the shift- and scale-invariance of matrix game equilibria implies that the choice of mean and variance is inconsequential). Since FP and AFP are so similar, and AFP computes two best responses per timestep, it's natural to wonder: is AFP's superior performance just an artifact of using more computation per timestep? So, in order to make a fair comparison, we compare the algorithms by {\em the number of best responses calculated} instead of the number of timesteps (algorithm iterations). Using the worst-case payoff as the measure of performance, we compare FP and AFP based on the number of responses computed and based on matrix size in Figures  \ref{fig:proportion_afp_better_fp} and \ref{fig:performance_by_size}.

The result is that AFP is clearly better on both counts. Although FP is better for a substantial proportion of $30 \times 30$ games at very early timesteps $t$, AFP quickly outpaces FP, eventually across each of 1,000 matrices sampled. In terms of matrix size, FP and AFP appear equivalent on average for small matrices, but quickly grow separated as matrix size grows, with AFP likely to be much better.

\begin{figure}[ht]
    \centering
    \includegraphics[width=0.6\textwidth]{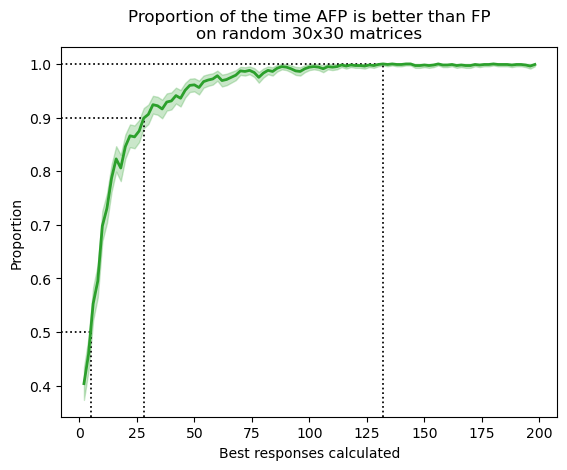}
    \caption[The proportion of the time that AFP outperforms FP on random (30,30) matrices]{For 1,000 randomly sampled (30,30) matrices $A$, the proportion of the time that $\min \, (\xbar_{r/2}^\textnormal{AFP})^\T A \geq \min \, (\xbar_{r}^\textnormal{FP})^\T A$ for $r=2,4\dots,200$. A 95\% Agresti-Coull confidence interval \citep{agresti1998approximate} for the true proportion is highlighted. 
    Note that after only about six best responses, AFP is better half the time, and by 130, AFP is better than FP essentially 100\% of the time.}
    \label{fig:proportion_afp_better_fp}
\end{figure}

\begin{figure}[ht]
    \centering
    \includegraphics[width=0.6\textwidth]{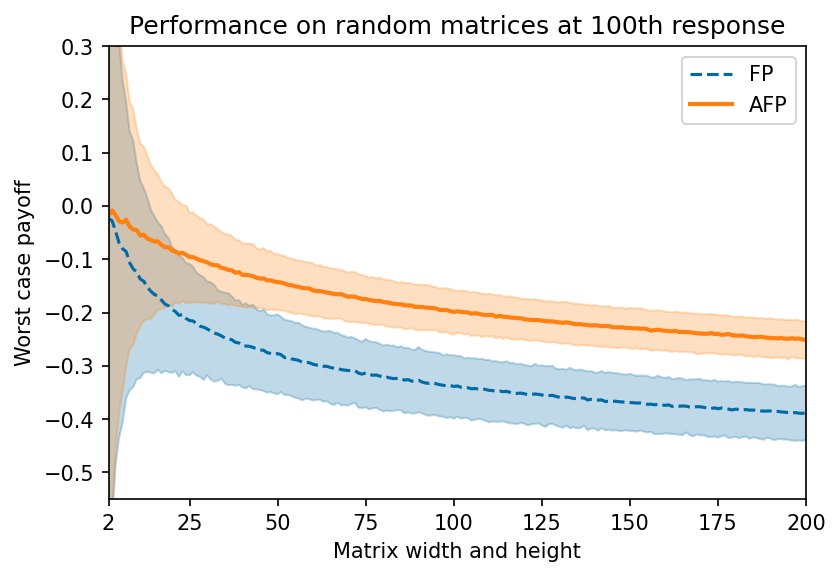}
    \caption[Average performance of FP vs. AFP at the 100th best response for randomly generated matrices]{Average performance of FP vs. AFP at the 100th best response (timestep 100 for FP, timestep 50 for AFP) as matrix size is varied. All matrices are square. Highlighted regions show the 10th and 90th percentiles.}
    \label{fig:performance_by_size}
\end{figure}


\section{Application to reinforcement learning}

We apply reinforcement learning (RL) \citep{sutton2018reinforcement} versions of FP and AFP in the context of a (two-player, zero-sum, symmetric) {\em stochastic game} (\citet{shapley1953stochastic}), defined by the tuple $(\St, \Obs, \X, \A, \Prob, \mathcal{R}, p_0)$, where $\St$ is the set of possible states of the environment, $\Obs$ is the set of possible observations received by an agent, $\X: \St \rightarrow \Obs \times \Obs$ gives the observations for each player based on the current state, $\A$ is the set of available actions, $\Prob : \St \times \A \times \A \rightarrow \Delta(\St)$ defines the transition dynamics for the environment given each player's action, $\mathcal{R} : \St \rightarrow \R \times \R$ defines the reward for both players such that $\mathcal{R}(s_t) = (r_t, -r_t)$ are the rewards observed by each player at time $t$, and $p_0 \in \Delta(\St)$ is the initial distribution of states, such that $s_0 \sim p_0$. Let $\Hist$ be the set of possible sequences of observations. Then a {\em policy} is a map $\pi : \Hist \rightarrow \Delta(A)$. An {\em episode} is played by iteratively transitioning by the environment according to the actions sampled from each players' policies at each state. Players 1 and 2 earn {\em returns} $(\sum_t r_t, -\sum_t r_t)$. The reinforcement learning algorithms we consider take sequences of observations, actions, and rewards from both players and use them to incrementally update policies toward earning greater expected returns. For background on reinforcement learning, see \citet{sutton2018reinforcement}. For details on machine learning approximations to FP, see \citet{heinrich2015fictitious}. Table \ref{tab:game_theory_to_rl_analogy} gives a high-level overview of the relationship. 

\begin{table}[htb]
    \centering
    \small
    \caption{The normal-form game analogies used to extend FP and AFP to reinforcement learning.}
    \begin{tabular}{p{4.3cm}p{8cm}}
    \toprule
    Normal-form game & Stochastic/extensive-form game \\
    \midrule
    Strategy & Policy \\
    Payoff $A_{i,j}$ & Expected return $E_{\pi_i, \pi_j}(\sum_t r_t)$ \\
    Best response & Approximate best response by RL \\
    Strategy mixture $\sum \alpha_i x_i$, $\sum \alpha_i =1$, $\alpha_i \geq 0$  & At start of episode, sample policy $\pi_i$ with probability $\alpha_i$. Play entire episode with $\pi_i$. \\
    \bottomrule
    \end{tabular}
    \label{tab:game_theory_to_rl_analogy}
\end{table}

We use two environments, our own TinyFighter, and Running With Scissors, from \citet{vezhnevets2020options}.

\subsection{Environments}
{\bf TinyFighter} is a minimal version of an arcade-style fighting game shown in Figure \ref{fig:tinyfighter_env}. It features two players with four possible actions: \texttt{Move} \texttt{Left}, \texttt{Move} \texttt{Right}, \texttt{Kick}, and \texttt{Do} \texttt{Nothing}. Players are represented by a rectangular body and when kicking, extend a rectangular leg towards the opponent.

Kicking consists of three phases: Startup, Active, and Recovery. Each phase of a kick lasts for a certain number of frames, and if the Active phase of the kick intersects with any part of the opponent (body or leg), a hit is registered. When a hit occurs, the players are pushed back, the opponent takes damage, and the opponent is stunned (unable to take actions) for a period of time. In the Startup and Recovery phases, the leg is extended, and like the body, can be hit by the opponent if the opponent has a kick in the active phase that intersects the player. The game is over when a player's health is reduced to zero or when time runs out. 

Player observations are vectors in $\R^{13}$ and contain information about player and opponent state: position, health, an `attacking' indicator, a `stunned' indicator, and how many frames a player has been in the current action. The observation also includes the distance between players, time remaining, and the direction of the opponent (left or right of self). The game is partially observable, so information about the opponent's state is hidden from the player for some number of frames (we use four, and the game runs at 15 frames per second). This means a strong player must guess about the distribution of actions the opponent may have taken recently and to respond to that distribution; playing deterministically will allow the opponent to exploit the player and so a stochastic strategy is required to play well. 

{\bf Running With Scissors} (RWS) is a spatiotemporal environment with partial observability with potential for nontransitive relationships between policies. As shown in Figure \ref{fig:running_with_scissors_env}, RWS is a 2D gridworld with a few types of entities: two agents; three types of items: rock, paper, and scissors, which can be picked up by the agents to add to their inventories; and impassable walls. In addition to moving around and picking up items, agents in RWS have a ``tag'' action, which projects a cone in front of them for a single frame. If the cone hits the other agent, the episode ends and each agent receives rewards based on the payoff matrix $C^n$ according to the ratios of each item in their inventory. Agents can only see their own inventory and a $5 \times 5$ grid in front of them and can remember the last four frames they've seen, so in order to perform effectively they must infer what items the opponent has picked up. \citet{vezhnevets2020options} and \citet{liu2022neupl} (Appendix B.1.) feature further discussion of the environment.

\begin{figure}[htb]
    \centering
    \begin{subfigure}[]{0.45\textwidth}
    \centering
     \includegraphics[width=0.87 \textwidth]{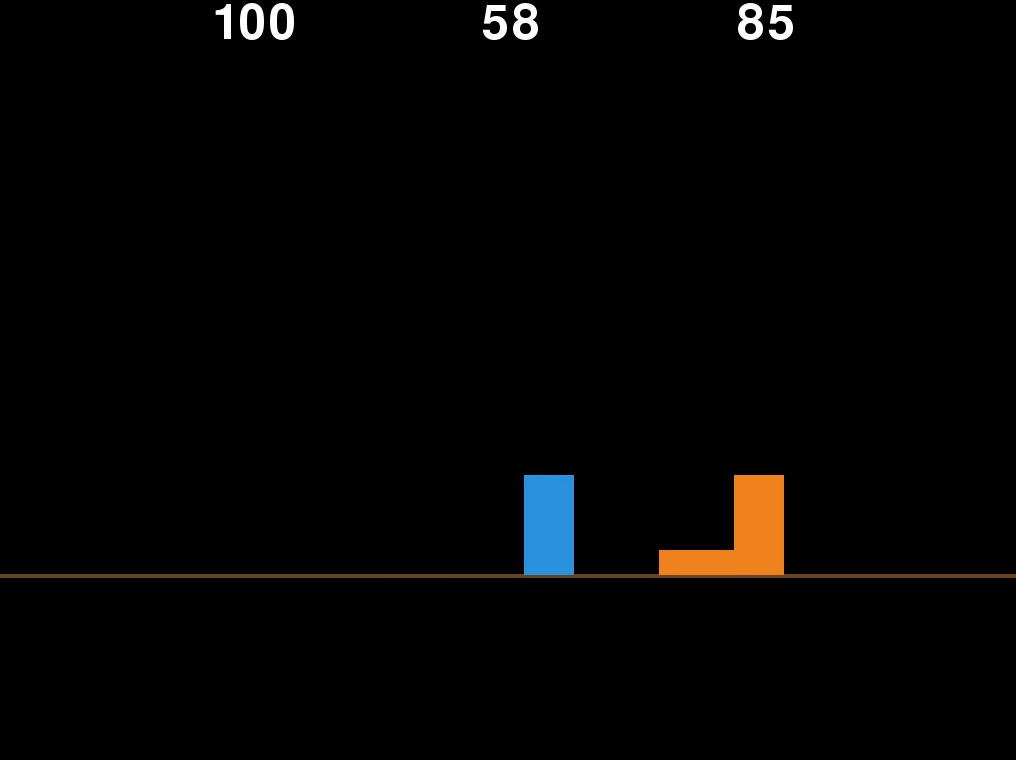}
    \caption{TinyFighter, where players dance back and forth, attempting to land kicks on each other in order to reduce the other's health from 100 to zero.}
    \label{fig:tinyfighter_env}
    \end{subfigure}
    \begin{subfigure}[]{0.45\textwidth}
    \centering
    \includegraphics[width=\textwidth]{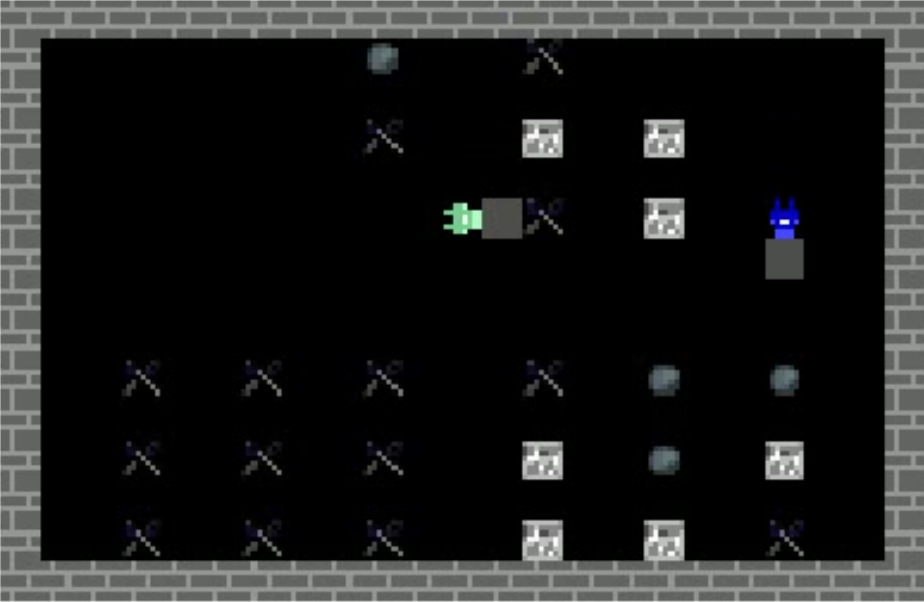}
    \caption{Running With Scissors, a 2D gridworld where players collect items before tagging each other to earn reward based on the proportion of items in their inventory.}
     \label{fig:running_with_scissors_env}
    \end{subfigure}
    
    \caption{Screenshots of multiagent RL environments.}
\end{figure}



\subsection{Adapting FP and AFP to reinforcement learning}
Neural Population Learning (NeuPL) \citep{liu2022neupl} is a framework for multiagent reinforcement learning wherein a collection of policies is learned and represented by a single neural network and all policies train continuously. 
For our experiments, we implement FP and AFP within NeuPL, as shown in Algorithm \ref{alg:fp_afp_neupl}. For reference, we also include a simple RL version of FP and AFP in the style of PSRO in Appendix \ref{appendix:vanilla_rl_fp_afp}. 

\begin{algorithm}[htb]
\caption{NeuPL-FP/AFP}
\label{alg:fp_afp_neupl}
\begin{algorithmic}[1]
\State $\mathfrak{O} \in \{ \mathfrak{O}^\text{FP}, \mathfrak{O}^\text{AFP}\}$ \Comment{Input: FP or AFP opponent sampler.}
\State $\left\{\Pi_\theta(t) : \Hist \rightarrow \Delta(\A)\right\}_{t=1}^n$ \Comment{Input:  neural population net.}
\For{Batch $b = 1, 2, 3, \dots,$}
    \State $B \gets \{ \}$
    \While {per-batch compute budget remains}
        \State $T_\text{learner} \sim \text{Uniform}(\{1,\dots,n\})$
        \State $T_\text{opponent} \sim \mathfrak{O}(T_\text{learner})$
        \State $D_\text{learner} \gets \textsc{PlayEpisode}\!\left(\Pi_\theta(T_\text{learner}), \Pi_\theta(T_\text{opponent})\right)$
        \State $B \gets B \cup D_\text{learner}$
    \EndWhile
    \State $\Pi_\theta \gets \textsc{ReinforcementLearningUpdate}(B)$
\EndFor
\end{algorithmic}
\end{algorithm}

\begin{figure}[ht]
    \centering
    \includegraphics[width=0.9\textwidth]{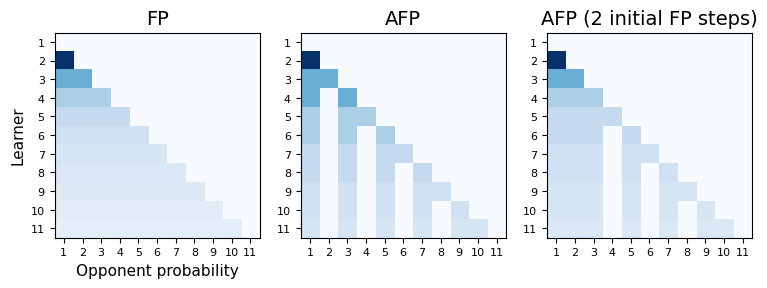}
    \caption{A visual depiction of the distributions of opponents (``meta-strategies'' in PSRO or NeuPL) each learner faces in a population learning implementation of FP or AFP. The $(i,j)$ entry is the probability that, given that agent $i$ is training, it will face agent $j$ in a particular episode. Dark blue indicates probability 1, white indicates probability 0.}
    \label{fig:metagraphs}
\end{figure}

In NeuPL-FP/AFP, the opponent sampler $\mathfrak{O}$ 
determines the distributions of opponents that each agent faces and is the only difference between the FP and AFP implementations. We have, for each $t >1$,
\begin{align*}
    \mathfrak{O}^\text{FP}(t) &= \text{Uniform}(\{1,2,3,\dots,t-1\}), \text{ and} \\
    \mathfrak{O}^\text{AFP}(t) &= \text{Uniform}(\{ k < t : k \text{ odd} \} \cup \{t-1\} ).
\end{align*}
These distributions are depicted in Figure \ref{fig:metagraphs}. Just as each step of FP involves computing a best response to an average against all prior strategies, sampling from $\mathfrak{O}^\text{FP}(t)$ corresponds to training agent $t$ uniformly against the prior policies; just as AFP can be thought of ``forgetting'' every other index, $\mathfrak{O}^\text{AFP}(t)$ trains learner index $t$ uniformly against every odd indexed policy plus the most recent policy.  The neural population net $\Pi_\theta(t) : \Hist \rightarrow \Delta(A)$ defines a different policy for each agent index $t$, and can equivalently be represented as $\Pi_\theta(a|s,t)$.

\subsection{Experimental setup}
For the neural population net, we use an actor-critic \citep{sutton2018reinforcement} architecture similar to that used for RWS in \citet{liu2022neupl}: first, a module is used to process environment observations into a dense representation that is shared between an actor head and a critic head. The actor takes this vector, concatenates it with a vector representing the distribution of opponents faced by the currently training agent $t$ (e.g., $[0.5, 0.5, 0, \dots, 0]$ for agent $t=3$ in FP or AFP), then processes it with a dense MLP with ReLu activations, with action masking applied prior to a final Softmax layer. The critic is similar, except an additional input is used: the index of the opponent sampled at the beginning of this episode, matching the original implementation.\footnote{Note that the actor (policy network) does not observe which opponent it faces, only the {\em distribution} over agents it faces; this is important because otherwise our agent would not learn a best response to an average policy as intended in FP and AFP. The reason for including this information for the critic (value network) is that it may reduce the variance of the value function estimator.} For the exact architectures used, see Appendix \ref{appendix:rl_experiment_hyperparameters}. We use a neural population size of $n=11$. Based on matrix game simulations and a preliminary RL experiments, we determined that AFP performs slightly better in a short time horizon when initialized with two steps of FP, so we do this, as shown in the final panel of Figure \ref{fig:metagraphs}. See Appendix \ref{appendix:additional_figures} for a figure comparing performance in the matrix setting.

We implemented NeuPL within a basic self-play reinforcement learning loop by wrapping the base environment (TinyFighter or RWS) within a lightweight environment that handles NeuPL logic, such as opponent sampling. For reinforcement learning, we use the Asynchronous Proximal Policy Optimization (APPO) algorithm \citep{schulman2017proximal}, a distributed actor-critic RL algorithm, as implemented in RLLib \citep{moritz2018ray} with a single GPU learner. Hyperparameter settings are given in Appendix \ref{appendix:rl_experiment_hyperparameters}. We train the entire neural population net (agents 1-11) for 12,500 steps, where a step is roughly 450 minibatch updates of stochastic gradient descent. This corresponds to about five days of training. For each of the two environments, we repeat this procedure independently 10 times for FP and 10 times for AFP.

\subsection{Results}
To evaluate exploitability, we made use of the fact that each FP and AFP neural population are made up of agents trained to ``exploit'' the ones that came before them. Specifically, each agent is trained to approximate a best response to the average policy returned by the algorithm at the previous timestep. So, to estimate the exploitability of NeuPL-FP or NeuPL-AFP at step $t \in \{1,\dots,n-1\}$, we simply use the average return earned by agent $t+1$ against agents $\{1,\dots,t\}$ to obtain the {\em within-population exploitability} at $t$. This is a convenient metric, but insufficient on its own. In order for it to be meaningful, the agents in the population must have learned approximate best responses that are close to actual best responses; if they have not, it could be that within-population exploitability is low, not because the average policy approximates a Nash policy but because nothing had been learned at all. To account for this, we also evaluate the populations learned using {\em relative population performance} \citep{balduzzi2019open}, which measures the strength of one population of agents against the other. The purpose of using relative population performance is simply to verify that one algorithm did not produce generally more competent agents than the other.

We paired each of the 10 replicates of FP and AFP and computed the relative population performance for each. On TinyFighter, the average was -0.73, with a Z-test-based 90\% confidence interval width of 1.37. On RWS, the average was 4.60, with a Z-test-based 90\% confidence interval width of 3.50. We conclude that the agents learned by FP and AFP are not statistically significantly different in terms of performance for TinyFighter, but the agents learned by FP have a slight, statistically significant advantage in RWS. However, these differences are small relative to the total obtainable reward in either environment (20 for TinyFighter, roughly 60 for RWS), so we conclude it is reasonable to use within-population exploitability to compare FP and AFP, as shown in Figure \ref{fig:rl_results}. For consistency with the matrix game simulation results, we plot worst case payoff, which is simply the negative of exploitability in this case.

\begin{figure}
\centering
\begin{subfigure}[]{0.45\textwidth}
    \centering
    \includegraphics[width=\textwidth]{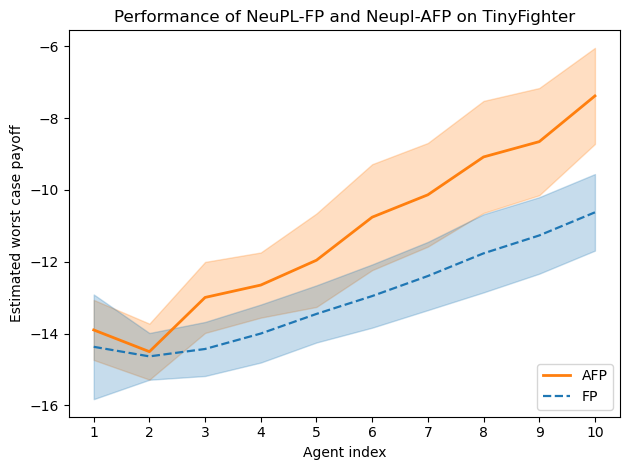}
    \caption[Comparison of NeuPL-FP and NeuPL-AFP on TinyFighter]{TinyFighter}
    \label{fig:tinyfighter_exploitability}
 \end{subfigure}
\begin{subfigure}[]{0.45\textwidth}
    \centering
    \includegraphics[width=\textwidth]{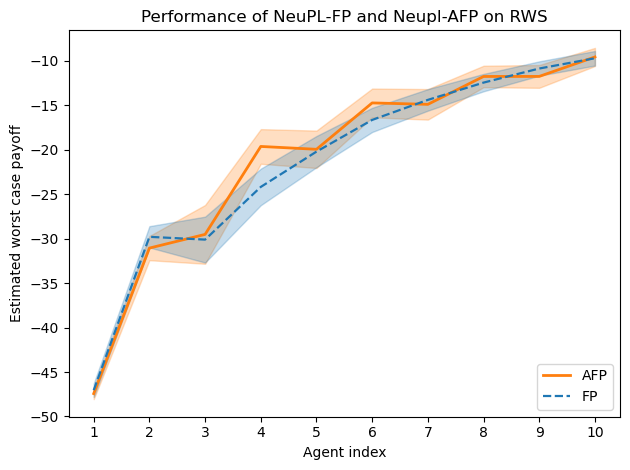}
    \caption[Comparison of NeuPL-FP and NeuPL-AFP on TinyFighter]{Running With Scissors}
    \label{fig:rws_exploitability}
\end{subfigure}
\caption{Estimated worst-case payoffs for FP and AFP on two stochastic games. Highlighting indicates a pointwise 90\% confidence region.}
\label{fig:rl_results}
\end{figure}

We find that AFP has a significantly better worst-case payoff of -7.4 versus 10.6 for FP at the final timestep in TinyFighter. This corresponds to a noteworthy 16\% reduction in exploitability relative to the total possible reward of 20 that can be earned in TinyFighter. In RWS, the algorithms have essentially identical performance. The fact that AFP's advantage varies widely by environment is not surprising. The matrix game simulations in Figure \ref{fig:proportion_afp_better_fp} showed that until over 100 steps of each algorithm, there is some proportion of games for which FP performs better. Correspondingly, we would expect that there is a nontrivial proportion of stochastic games where NeuPL-FP outperforms NeuPL-AFP for small population sizes. Although we expect NeuPL will not be able to support over 100 policies (the original paper used population size 8), it would be possible to do so within the PSRO framework. This remains a topic for further investigation.


\section{Conclusion}
We proposed a variant of fictitious play for faster estimation of Nash equilibria in two-player, zero-sum games. Anticipatory fictitious play is intuitive, easy to implement, and supported by theory and numerical simulations which suggest that it is virtually always preferable to fictitious play. Consequently, we shed new light on two motivating problems for fictitious play: primarily, large-scale multiagent reinforcement learning for complicated real-world games; also, modeling strategic decision making in humans. Further work is needed to understand the conditions under which AFP outperforms FP in the reinforcement learning setting.




\section*{Acknowledgements}
The authors sincerely thank Ryan Martin for continued mentorship, especially around theory and the presentation of the paper; Philip Wardlaw for orchestrating preliminary reinforcement learning experiments; Adam Venis and Angelo Olcese for contributions to the proofs of FP and AFP applied to $C^n$; Jesse Clifton and Eric Laber for helpful comments on a draft;  Jesse Clifton and Marc Lanctot for suggesting related works that had been overlooked; and Siqi Liu for helpful explanations and implementation details for NeuPL.
 


\bibliography{refs}


\clearpage

\appendix

\noindent{\huge\bfseries \centering Appendix\par}
\section{Why naive AFP doesn't work} \label{sec:naive_afp}
The original idea for the AFP algorithm, which we refer to as {\em naive AFP}, was: at each timestep, play the best response to the opponent's best response to your history. Formally, this given by the updates (c.f. the AFP update \eqref{eqn:afp_update}):
\begin{align*}
    x'_{t+1} &\in \br^1_A(\ybar_t); & y'_{t+1} &\in \br^2_A(\xbar_t); \nonumber \\
    x_{t+1} &\in \br^1_A(y'_{t+1}) ; & y_{t+1} &\in \br^2_A(x'_{t+1}) ; \nonumber \\
    \xbar_{t+1} &= \frac{1}{t+1} \sum_{k=1}^{t+1} x_k; & \ybar_{t+1} &= \frac{1}{t+1} \sum_{k=1}^{t+1} y_k.
\end{align*}
In preliminary simulations, naive AFP performed well in cyclic games and seemed to be competitive with FP in other games. 

However, upon inspection it becomes clear that Naive AFP is not guaranteed to converge to a Nash equilibrium. This is because Naive AFP can only play best responses to strategies returned by the best response operator, which are pure strategies, but Nash equilibria of some games have nonzero probability assigned to strategies that are not best responses to any pure strategies. Thus there are some games where naive AFP is incapable of assigning any probability to actions which must be assigned nonzero probability in a Nash equilibrium, so naive AFP cannot converge to a Nash equilibrium. For example, see the game given in Table \ref{tab:RPS_safeRock}.

\begin{table}[ht]
    \centering
    \begin{tabular}{c|ccc}
                  & Rock & Paper & Scissors  \\ \hline
        Rock      & 0    & -1    & 1         \\            
        Paper     & 1    & 0     & -1        \\   
        Scissors  & -1   & 1     & 0         \\    
        SafeRock & 0    & 0     & 0.99      \\  
    \end{tabular}
    \caption{Rock Paper Scissors SafeRock. In this game, SafeRock is not a best response to any of the opponent's pure strategies, so it won't be played by naive AFP. However, SafeRock is included in the Nash equilibrium which is given approximately by the following probabilities $([0.33,0.47,0.2],[0,0.32,0.18,0.15])$, with value $v^* = 0.133$. If SafeRock were removed the game would be symmetric and have value 0, so we know SafeRock must be included in {\em any} Nash support.}
    \label{tab:RPS_safeRock}
\end{table}


\section[Proof that AFP converges]{Proof of Proposition~\ref{prop:AFP_converges}} \label{appendix:proof_of_convergence}
Our analysis closely follows the original proof of FP's convergence given in \citet{robinson1951iterative}, which consists of four key lemmas. We introduce the notion of a perturbed fictitious play system, a slight generalization of Robinson's ``vector system.'' We modify Robinson's second lemma accordingly, which causes inconsequential changes in the third and fourth lemma, leaving the final result the same. In this way, we extend Robinson's proof to a broader class of algorithms which include fictitious play and anticipatory fictitious play as special cases.

\begin{defn} An {\em iterative play system} $(U,V)$ for $A \in \R^{m \times n}$ is a pair of sequences of vectors $U=\{U(0),U(1),
\dots\}$ and $V=\{V(0),V(1),\dots\}$ with each $U(t) \in \R^n$ and $V(t) \in \R^m$ satisfying the following properties:
\begin{itemize}
    \item $\min U(0) = \max V(0)$, and 
    \item for each $t$, $U(t+1) = U(t) + A_{i(t),*}$ and $V(t+1) = V(t) + A_{*,j(t)}$,
\end{itemize}
where $A_{i,*}$ and $A_{*,j}$ are the $i$th row and $j$th column of $A$, and $i(t)$ and $j(t)$ are the row and column ``played'' by players 1 and 2 at time $t$.
\end{defn}

The interpretation of an iterative play system is as follows. Suppose we choose $U(0) = 0$ and $V(0) = 0$. Write $e_h$ to indicate a vector with a 1 at index $h$ and 0's elsewhere. Then $\xbar_t = t^{-1} \sum_{k=1}^t e_{i(k)}$ and $\ybar_t = t^{-1} \sum_{k=1}^t e_{j(k)}$ are the empirical strategies played by players 1 and 2, and $t^{-1} V(t) = A \ybar_t$ and $t^{-1} U(t) = A^\T \xbar_t$ are the payoffs for player 1 and 2 when faced with those empirical strategies. In this way, $V(t)$ and $U(t)$ can be seen as ``accumulating empirical payoffs'' for players $1$ and $2$.

\begin{defn} A {\em perturbed fictitious play} system (PFP-system) is an iterative play system with the additional property that $i(t)$ and $j(t)$ are best responses to a perturbed set of empirical payoffs. Precisely, an iterative play system is a PFP-system such that for any values $E_V(t) \in \R^m$ and $E_U(t) \in \R^n$ with $\|E_V(t)\|_\infty < a := \max_{i,j} |a_{i,j}|$ and $\|E_U(t)\|_\infty < a$ for each $t$, 
\begin{align*}
    i(t+1) \in \arg\max \big[ V(t) + E_V(t) \big] \text{ and } j(t+1) \in \arg\min \big [U(t) + E_U(t) ].
\end{align*}
\end{defn}
A special case of a PFP-system is what Robinson calls a ``vector system,'' which describes fictitious play. This is obtained by setting all entries of $E_V(t)$ and $E_U(t)$ to zero at all timesteps.

\begin{lemma} If $(U,V)$ is an iterative play system for $A$, then 
\begin{align*}
\underset{t \rightarrow \infty}{\lim\inf} \; t^{-1} \bigl\{ \max V(t) - \min U(t) \bigr\} \geq 0.
\end{align*} \label{lemma1}
\end{lemma}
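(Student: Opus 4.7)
The plan is to lower-bound $\max V(t) - \min U(t)$ by evaluating each of $V(t)$ and $U(t)$ against a well-chosen probability vector, so that the $\Theta(t)$ contributions from the accumulated payoffs cancel. For any probability vectors $p \in \Delta^m$ and $q \in \Delta^n$ we have $\max V(t) \geq p^\T V(t)$ and $\min U(t) \leq q^\T U(t)$. The natural choice is the empirical distribution of the indices played so far:
\begin{align*}
    \xbar_t := t^{-1}\sum_{k=0}^{t-1} e_{i(k)}, \qquad \ybar_t := t^{-1}\sum_{k=0}^{t-1} e_{j(k)}.
\end{align*}

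Unrolling the defining recurrences of an iterative play system gives $V(t) = V(0) + tA\ybar_t$ and $U(t) = U(0) + tA^\T\xbar_t$. Substituting,
\begin{align*}
    \xbar_t^\T V(t) &= \xbar_t^\T V(0) + t\,\xbar_t^\T A \ybar_t, \\
    \ybar_t^\T U(t) &= \ybar_t^\T U(0) + t\,\xbar_t^\T A \ybar_t.
\end{align*}
The $t\,\xbar_t^\T A \ybar_t$ terms are identical and cancel upon subtraction, so
\begin{align*}
    \max V(t) - \min U(t) \;\geq\; \xbar_t^\T V(t) - \ybar_t^\T U(t) \;=\; \xbar_t^\T V(0) - \ybar_t^\T U(0) \;\geq\; -\|V(0)\|_\infty - \|U(0)\|_\infty,
\end{align*}
a bound independent of $t$. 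Dividing by $t$ and taking the limit inferior yields the claim.

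I do not anticipate a serious obstacle here: the lemma is essentially bookkeeping that captures the (easy) intuition that each player does at least as well asymptotically as a player who simply imitates the empirical distribution of their own past plays, and the argument does not even require the best-response structure (it applies to any iterative play system, regardless of how $i(t)$ and $j(t)$ are chosen). The hard work lies in the matching upper bound $\limsup_t t^{-1}[\max V(t) - \min U(t)] \leq 0$, which is where the PFP structure (best responses to a payoff vector perturbed by $E_V$, $E_U$) must be used, and which will occupy the subsequent lemmas. The calibration hypothesis $\min U(0) = \max V(0)$ is unused in this step but will later pin down the common limit as the value of the game.
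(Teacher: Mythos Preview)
Your proof is correct. The paper itself does not give a detailed argument for this lemma, only the one-line remark that it ``follows from the minimax nature of two-player zero-sum games and holds regardless of what rows and columns of $A$ are used to update elements of $U$ and $V$.'' The standard way to read that remark is to test $V(t)$ and $U(t)$ against a Nash pair $(x^*,y^*)$: since $(x^*)^\T A y \ge v^*$ for all $y$ and $x^\T A y^* \le v^*$ for all $x$, one gets $\max V(t) \ge (x^*)^\T V(0) + t v^*$ and $\min U(t) \le (y^*)^\T U(0) + t v^*$, and the $tv^*$ terms cancel. Your choice of the empirical distributions $\xbar_t,\ybar_t$ in place of $(x^*,y^*)$ makes the bilinear term $t\,\xbar_t^\T A \ybar_t$ cancel exactly without ever invoking the existence of a value or equilibrium, so your argument is strictly more elementary while reaching the same $O(1)$ lower bound. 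Your closing observations---that neither the best-response structure nor the condition $\min U(0)=\max V(0)$ is needed here---are also accurate and consistent with the paper's remark.
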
 
This lemma follows from the minimax nature of two-player zero-sum games and holds regardless of what rows and columns of $A$ are used to update elements of $U$ and $V$.

\begin{defn} Given an iterative play system $(U,V)$ for matrix $A$, we say row $A_{i,*}$ is {\em $E$-eligible} in the interval $[t, t']$ if for some $t_1 \in [t,t']$, $i$ could have been played as part of AFP. Precisely, the condition is that there exists $t_1 \in [t,t']$ such that 
\begin{align*}
     i \in \arg\max \big[ V(t_1) + E \big] 
\end{align*}
for some $E \in \R^m$ with $\|E\|_\infty < a$. Or, equivalently,
\begin{align*}
    v_i(t_1) \geq \max V(t_1) - 2a.
\end{align*}
Similarly, we say column $j$ is $E$-eligible if
\begin{align*}
   u_j(t_1) \leq \min U(t_1) + 2a. 
\end{align*}

\end{defn}

\begin{lemma} \label{lemma2} If $(U,V)$ is an iterative play system for $A$ and all rows and columns are $E$-eligible in $[s,s+t]$, then we have
\begin{align*}
    &\max U(s+t) - \min U(s+t) \leq 2a(t+1), \text{ and} \\
    &\max V(s+t) - \min V(s+t) \leq 2a(t+1).
\end{align*}
\end{lemma}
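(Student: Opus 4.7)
The plan is to mirror Robinson's original Lemma~2 for (unperturbed) fictitious play, while carefully tracking how the $2a$ slack introduced by $E$-eligibility accumulates. I will describe the argument for $V$; the bound on $U$ follows by the symmetric argument with min and max swapped and row eligibility replaced by column eligibility.

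First I would fix an arbitrary index $i \in \{1,\dots,m\}$ and invoke the row-$E$-eligibility hypothesis to choose a time $t_1 = t_1(i) \in [s, s+t]$ at which $v_i(t_1) \geq \max V(t_1) - 2a$. Next I would exploit the update rule $V(\tau+1) = V(\tau) + A_{*,j(\tau)}$: since every entry of $A$ is bounded in absolute value by $a$, each coordinate of $V$ is $a$-Lipschitz in $\tau$. I would then apply this Lipschitz property twice — once forward, to lower bound $v_i(s+t)$ in terms of $v_i(t_1)$, and once backward, to lower bound $v_{i^*}(t_1)$ in terms of $v_{i^*}(s+t)$, where $i^* \in \arg\max V(s+t)$ — and chain the two inequalities through the $E$-eligibility bound. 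Writing $\delta := s+t-t_1 \in [0,t]$, this chain should yield $v_i(s+t) \geq \max V(s+t) - 2a\delta - 2a \geq \max V(s+t) - 2a(t+1)$. Since $i$ was arbitrary, in particular it holds for the index achieving $\min V(s+t)$, giving the desired bound.

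The main obstacle is really just bookkeeping: seeing where each factor in $2a(t+1)$ originates. Two distinct $a\delta$ contributions appear, one per Lipschitz propagation, accounting for the $2at$ portion; the extra $+1$ (i.e.\ an additional $2a$) is precisely the price of having relaxed the exact condition $i \in \arg\max V(t_1)$ used in Robinson's proof to its $E$-perturbed analogue $v_i(t_1) \geq \max V(t_1) - 2a$. Once this correspondence is made explicit, no further modification of Robinson's argument is needed for this lemma, and the constants propagate cleanly to the subsequent lemmas in the chain leading to Proposition~\ref{prop:AFP_converges}.
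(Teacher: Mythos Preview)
Your proposal is correct and follows essentially the same approach as the paper's proof: pick an index, invoke its $E$-eligibility at some $t_1\in[s,s+t]$, and propagate via the $a$-Lipschitz property of the coordinates of $V$ (respectively $U$), incurring $2a\delta$ from the two propagations plus $2a$ from the eligibility slack. The only cosmetic differences are that the paper directly fixes the extreme index (the paper picks $j\in\arg\max U(s+t)$, whereas you take an arbitrary $i$ and specialize at the end) and that the paper bounds $\min U(t')$ via the Lipschitz property of the min function itself rather than routing through the specific index $i^*$ as you do; neither difference affects the argument or the constants.
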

\begin{proof}
Let $j \in \arg\max \, U(s+t)$. By the $E$-eligibility of $j$, there must exist $t' \in [s,s+t]$ such that
\begin{align*}
    u_j(t') - \min U(t') \leq 2a.
\end{align*}
So, because the $j$th entry can't change by more than $a$ per timestep,
\begin{align*}
    \max \, U(s+t) &= u_j(s+t) \\ 
    &\leq u_j(t) + at  \\
    &\leq \min \, U(t') + 2a + at  \\
    &\leq \min \, U(t+s) + at + 2a + at,
\end{align*}
where the last inequality holds because for $t' \in [s,s+t]$, the minimum of $U(t')$ versus $U(s+t)$ can't change by more than $at$ in $t$ timesteps. A similar argument shows the result for $V$.
\end{proof}

\begin{lemma} \label{lemma3}
If $(U,V)$ is an iterative play system for $A$ and all rows and columns are $E$-eligible in $[s,s+t]$, then 
\begin{align*}
    \max V(s+t) - \min U(s+t) \leq 4a(t+1).
\end{align*}
\end{lemma}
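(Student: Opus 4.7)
The plan is to combine Lemma~\ref{lemma2} with a symmetric counterpart to Lemma~\ref{lemma1}. First, I would invoke Lemma~\ref{lemma2} under the $E$-eligibility hypothesis to obtain the two oscillation bounds $\max V(s+t) - \min V(s+t) \leq 2a(t+1)$ and $\max U(s+t) - \min U(s+t) \leq 2a(t+1)$. Given these, the target inequality reduces to proving the ``dual minimax'' bound $\min V(\tau) \leq \max U(\tau)$, applied at $\tau = s+t$, since chaining then gives $\max V(s+t) \leq \min V(s+t) + 2a(t+1) \leq \max U(s+t) + 2a(t+1) \leq \min U(s+t) + 4a(t+1)$.

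The main step is to prove $\min V(\tau) \leq \max U(\tau)$ for every $\tau \geq 0$. Let $\xbar_\tau := \tau^{-1} \sum_{k=0}^{\tau - 1} e_{i(k)}$ and $\ybar_\tau := \tau^{-1} \sum_{k=0}^{\tau - 1} e_{j(k)}$ denote the empirical play distributions through time $\tau$, so that $V(\tau) = V(0) + \tau A \ybar_\tau$ and $U(\tau) = U(0) + \tau A^\T \xbar_\tau$. Since $\xbar_\tau$ and $\ybar_\tau$ are probability vectors, taking convex combinations against them gives
\begin{align*}
    \min V(\tau) &\leq \xbar_\tau^\T V(\tau) = \xbar_\tau^\T V(0) + \tau\, \xbar_\tau^\T A \ybar_\tau \leq \max V(0) + \tau\, \xbar_\tau^\T A \ybar_\tau, \\
    \max U(\tau) &\geq \ybar_\tau^\T U(\tau) = \ybar_\tau^\T U(0) + \tau\, \xbar_\tau^\T A \ybar_\tau \geq \min U(0) + \tau\, \xbar_\tau^\T A \ybar_\tau.
\end{align*}
Subtracting and invoking the iterative play system initialization condition $\min U(0) = \max V(0)$ cancels the common $\tau\, \xbar_\tau^\T A \ybar_\tau$ term, yielding $\min V(\tau) - \max U(\tau) \leq 0$.

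The main obstacle is handling the initial offsets $V(0)$ and $U(0)$ without leaving residual error terms: a naive ``$\min V \leq x^\T A y \leq \max U$'' argument would stumble over these constants. The choice of weights $\xbar_\tau$ and $\ybar_\tau$ (rather than arbitrary basis vectors) is exactly what allows the initialization condition $\min U(0) = \max V(0)$ to close the argument cleanly; with any other weight choice one would be left with a $\max V(0) - \min U(0)$-style term that requires separate treatment. Aside from this, the argument is a routine chaining of inequalities together with Lemma~\ref{lemma2}.
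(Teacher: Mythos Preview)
Your proposal is correct and follows exactly the approach the paper defers to: the paper's proof simply cites \citet{robinson1951iterative} and notes that the result follows immediately from Lemma~\ref{lemma2} (with the constant $2at$ replaced by $2a(t+1)$), and the argument you spell out---chaining the two oscillation bounds from Lemma~\ref{lemma2} through the minimax-style inequality $\min V(\tau) \leq \max U(\tau)$, which in turn rests on the initialization condition $\min U(0) = \max V(0)$---is precisely Robinson's original argument.
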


\begin{proof}
As shown in \citet{robinson1951iterative}, this follows immediately from Lemma \ref{lemma2}. The only difference here is that we replace $2at$ with $2a(t+1)$.
\end{proof}

\begin{lemma} \label{lemma4}
For every matrix $A$, $\epsilon > 0$, there exists a $t_0$ such that for any anticipatory fictitious play system,
\begin{align*}
    \max V(t) - \min U(t) \leq \epsilon t \text{ for all } t \geq t_0.
\end{align*}
\end{lemma}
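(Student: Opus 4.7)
The plan is to prove Lemma~4 by strong induction on $N = m+n$, following Robinson's original scheme while using the additional tolerance supplied by $E$-eligibility. The base case $N = 2$ is immediate: with $m=n=1$, the sole row and column are always played, so $U(t) = V(t) = t A_{1,1}$ and $\max V(t) - \min U(t) = 0$ for every $t$. For the inductive step, fix $\epsilon > 0$ and $A \in \R^{m\times n}$. Apply the inductive hypothesis to each of the finitely many proper submatrices of $A$, with a reduced tolerance $\epsilon' := \epsilon/4$, to obtain a common threshold $t_1$ past which the analogous bound holds for every PFP system on a proper submatrix. Then choose $t_0 \geq 8a/\epsilon$ large enough that also $\epsilon t_0/(8a) \geq t_1$.

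Given $T \geq t_0$ and any PFP system on $A$, let $\tau \leq T$ be the largest time such that every row and every column of $A$ is $E$-eligible somewhere in $[\tau, T]$. If $T - \tau \leq \epsilon T/(8a)$, Lemma~\ref{lemma3} applied to the interval $[\tau, T]$ gives $\max V(T) - \min U(T) \leq 4a(T-\tau+1) \leq \epsilon T/2 + 4a \leq \epsilon T$ by the choice of $t_0$. Otherwise $T - \tau > \epsilon T/(8a) \geq t_1$, and by maximality of $\tau$ there is some row or column---say row $i$---that fails $E$-eligibility everywhere on $(\tau, T]$ and is therefore never played there. The restriction of the PFP system to $(\tau, T]$ thus evolves entirely on the submatrix $A^{(i)}$ obtained by deleting row $i$; after folding at most $2a$ of additive slack into its perturbations, it is itself a PFP system on $A^{(i)}$. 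Applying the inductive hypothesis to this restricted system of length $T - \tau \geq t_1$, together with the fact that the deleted $i$th coordinate of $V$ cannot fall below $\max V - 2a$ by more than a bounded amount without having been eligible, yields $\max V(T) - \min U(T) \leq \epsilon T$.

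The main obstacle will be the bookkeeping in the second case: verifying that the restricted evolution on $A^{(i)}$ is a genuine PFP system with perturbation norm still bounded by $a$, and propagating the inductive bound on the spread of the smaller system back to a bound on the full spread $\max V(T) - \min U(T)$. The critical observation is that the $2a$ of slack already built into the $E$-eligibility condition (via $\|E\|_\infty < a$) comfortably absorbs the perturbation introduced by AFP's look-ahead update, since the anticipated strategies $\xbar'_t$ and $\ybar'_t$ differ from $\xbar_t$ and $\ybar_t$ only by a single $1/(t+1)$-weighted best-response direction, producing a perturbation of the cumulative payoff vectors bounded uniformly by $a$. These are precisely the ``minor modifications to some arguments and new constants'' anticipated in the proof sketch accompanying Proposition~\ref{prop:AFP_converges}.
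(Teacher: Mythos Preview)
Your inductive scheme and Case~1 are correct, but Case~2 does not close, and the obstacle is not the bookkeeping you flag. To make the restriction to $(\tau,T]$ a PFP system on the submatrix $A^{(i)}$ you must enforce the initial condition $\min U'(0)=\max V'(0)$; as in Robinson's argument, this forces the shift $U'(t)=U(\tau{+}1{+}t)+c\,\ind_n$ with $c=\max V(\tau{+}1)-\min U(\tau{+}1)$. (No extra ``$2a$ of slack'' in the perturbations is needed: since the ineligible row $i$ is never selected on $(\tau,T]$, the same perturbations that witnessed the PFP condition on $A$ witness it on $A^{(i)}$. That part really is routine.) The inductive hypothesis then gives $\max V'(T{-}\tau{-}1)-\min U'(T{-}\tau{-}1)\le\tfrac{\epsilon}{4}(T-\tau)$, and undoing the shift yields
\[
\max V(T)-\min U(T)\;\le\;\tfrac{\epsilon}{4}(T-\tau)\;+\;\bigl[\max V(\tau{+}1)-\min U(\tau{+}1)\bigr].
\]
You supply no bound on the bracketed term. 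Knowing that every index is $E$-eligible \emph{somewhere} in $[\tau,T]$ bounds the spread at the endpoint $T$ (that is exactly Lemma~\ref{lemma3}), but says nothing about the spread near~$\tau$. Your remark about the deleted $i$th coordinate only ensures $\max V(T)=\max\Proj_i V(T)$; that is needed, but it is not the obstruction.

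The paper closes this gap with a different decomposition. Rather than a single data-dependent split at $\tau$, it fixes the block length $t^*$ in advance (from the inductive hypothesis with tolerance $\epsilon/2$) and partitions $[0,t]$ into consecutive blocks of length $t^*$. On a block where everything is $E$-eligible, Lemma~\ref{lemma3} bounds the spread absolutely by $4a(t^*{+}1)$; on a block where something is not, the submatrix argument above bounds the \emph{increment} of the spread across that block by $\tfrac{1}{2}\epsilon t^*$. Telescoping these increments back either to the last ``all-eligible'' block, or failing that to the initial stub of length ${<}\,t^*$ where the spread is trivially at most $2at^*$, gives $\max V(t)-\min U(t)\le 8at^*+\tfrac{1}{2}\epsilon t\le \epsilon t$ once $t\ge 16at^*/\epsilon$. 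Your single split has no such anchor to telescope back to.
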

\begin{proof}
We follow the proof of \citet{robinson1951iterative}, replacing the notion of eligibility with $E$-eligibility. The strategy is induction. If $A \in \R^{1 \times 1}$, the result is trivial because $V(t) = U(t)$ for all $t$. Now assume that the property holds for an arbitrary submatrix $A'$ obtained by deleting any number of columns or rows from $A$. We wish to show that the property also holds for $A$.

Using the inductive hypothesis, pick $t^*$ such that
\begin{align*}
    \max V'(t) - \min U'(t) < \tfrac{1}{2} \epsilon t \text{ for all } t \geq t^*
\end{align*}
for any $A'$ a submatrix of $A$ and $(U',V')$ an anticipatory fictitious play system for $A'$. 

As in \citet{robinson1951iterative}, we wish to show that if in $(U,V)$, some row or column is not $E$-eligible in $[s,s+t^*]$, then 
\begin{align}
    \max V(s+t^*) - \min U(s+t^*) < \max V(s) - \min U(s) + \tfrac{1}{2} \epsilon t^* \label{eqn:robinson3}
\end{align}
Suppose without loss of generality that row $A_{m,*}$ is not $E$-eligible $[s,s+t^*]$. Then we construct a new anticipatory fictitious play system $(U',V')$ for matrix $A'$, which is $A$ with row $m$ deleted. 
Define
\begin{align*}
    U'(t) &= U(s+t) + c \, \ind_n, \\
    V'(t) &= \Proj_m V(s+t),
\end{align*}
for $t=0,1,\dots,t^*$, where $\ind_n$ is a vector of $1$'s with $n$ entries,  $c = \max V(s) - \min U(s)$, and $\Proj_k : \R^m \rightarrow \R^{m-1}$ is the operator that removes entry $k$. We now check the conditions for an anticipatory fictitious play system:
\begin{itemize}
    \item We have $\min U'(0) = \min \{U(s) + [\max V(s) - \min U(s)] \ind_n\} = \max V(s) = \max V'(0)$, where the last equality holds because $m$ is not $E$-eligible, so it could not be a maximizer of $V(s)$, so deleting it to form $V'(0)$ does not change the maximum.
    \item We have that for each  $t$,
    \begin{align*}
        U'(t+1) &= U(s+t+1) + c \ind_n = U(s+t) + A_{i(s+t),*} + c \ind_n = U'(t) + A'_{i(s+t),*}, \\ 
        V'(t+1) &= \Proj_m V(s+t+1) = \Proj_m [V(s+t) + A_{*,j(s+t)}] = V'(t) + A'_{*,j(s+t)},
    \end{align*}
    where $A_{i(s+t),*} = A'_{i(s+t),*}$ because the AFP-ineligibility of $m$ implies $i(s+t) \neq m$.
    \item Finally, we must show that the rows and columns selected still qualify as ``anticipatory'' responses within the context of $(U',V')$ and $A'$, i.e. that
    \begin{align*}
        v'_{i(s+t)}(t) \geq \max V'(t) - 2a \text{ and } u'_{j(s+t)}(t) \leq \min U'(t) + 2a
    \end{align*}
    for each $t=0,1,\dots, t^*$. By the definition of $V'$ and fact that $i(s+t) \neq m$, we have 
    \begin{align*}
        v'_{i(s+t)}(t) &= v_{i(s+t)}(s+t) \\
        &\geq \max V(s+t) - 2a && \text{($(U,V)$ is a PFP-system)}\\
        &= \max V'(s) - 2a,  
    \end{align*}
    and $U'(t)$ is just a shifted version of $U(s+t)$, so the fact that $u_{j(s+t)}(s+t) \leq \min U(s+t) +2a$ implies the result.
\end{itemize}
These points verify that $(U',V')$ satisfy the conditions for an anticipatory fictitious play system for $A'$ on $t=0,\dots,t^*$. We can choose remaining values for both sequences for $t=t^*+1,t^*+2,\dots$ to satisfy the anticipatory fictitious play conditions. So, using the inductive hypothesis, we have
\begin{align*}
    \max V(s+t^*) - \min U(s+t^*) &= \max V'(t^*) - \min \big\{U'(t^*) - [\max V(s) - \min U(s)] \ind_n \big\} \\
    &= \max V'(t^*) - \min U'(t^*) + \max V(s) - \min U(s)  \\
    &<  \tfrac{1}{2} \epsilon t^* + \max V(s) - \min U(s).
\end{align*}

With \eqref{eqn:robinson3} established, we are ready to finish the proof: under the inductive hypothesis, we will be able to deal with time intervals by splitting them into two cases: if a row or column is not $E$-eligible, we apply \eqref{eqn:robinson3}; if all rows or columns are $E$-eligible, we apply Lemma \ref{lemma3}.

Specifically, we show that for any AFP system $(U,V)$ for $A$ and $t \geq 8at^*/\epsilon$,
\begin{align*}
    \max V(t) - \min U(t) < \epsilon t.
\end{align*}
Let $t > t^*$ and express it as $t = (\theta + q) t^*$, where $q \in \N$ and $\theta \in [0,1)$. We consider the collection of length-$t^*$ intervals $[(\theta+r-1)t^*, (\theta+r)t^*)]$ for $r=1,\dots,q$.
\begin{itemize}
    \item {\bf Case 1.} There is at least one interval where all rows and columns are $E$-eligible. Let $[(\theta+s-1)t^*, (\theta+s)t^*)]$ be the latest such interval. By Lemma \ref{lemma3}, 
    \begin{align*}
        \max V[(\theta+s) t^*] - \min V[(\theta+s) t^*] \leq 4a(t^*+1) \leq 8at^*,
    \end{align*}
    where the last inequality holds because $t^* \geq 1$. By choice of the interval, all subsequent intervals with $r=s+1,\dots,q$ have no $E$-eligible rows or columns, so \eqref{eqn:robinson3} gives that
    \begin{align*}
        \max V(t) - \min U(t) \leq \max V[(\theta+s) t^*] - \min V[(\theta+s) t^*] + \tfrac{1}{2} \epsilon t^* (q-s),
    \end{align*}
    noting that the result holds trivially if $q=s$. Combining the previous two results and loosening the bound, we have
    \begin{align}
        \max V(t) - \min U(t) \leq 8at^* + \tfrac{1}{2} \epsilon t^*(q-s) \leq (8a + \tfrac{1}{2} \epsilon q) t^*. \label{eqn:lemma4_case1}
    \end{align}
    
    \item {\bf Case 2.} In each interval $[(\theta+r-1)t^*, (\theta+r)t^*)]$ for $r=1,\dots,q$, some row or column is not $E$-eligible. Applying \eqref{eqn:robinson3} repeatedly,
    \begin{align}
        \max V(t) - \min U(t) &= \max V[(\theta+q) t^*] - \min U[(\theta+q) t^*] \nonumber \\
        &< \max V[(\theta+q-1) t^*] - \min U[(\theta+q-1) t^*] + \tfrac{1}{2} \epsilon t^* \nonumber \\
        &< \max V[(\theta+q-2) t^*] - \min U[(\theta+q-2) t^*] + \tfrac{1}{2} \epsilon t^* + \tfrac{1}{2} \epsilon t^* \nonumber  \\
        &< \dots \nonumber \\
        &< \max V(\theta t^*) - \min U(\theta t^*) + \tfrac{1}{2} q \epsilon t^* \nonumber \\
        &\leq 2a\theta t^* + \tfrac{1}{2} q \epsilon t^* \\
        & = (2a\theta + \tfrac{1}{2} q \epsilon) t^*, \label{eqn:lemma4_case2}
    \end{align}
    where the last inequality holds because $\max V(\theta t^*) \leq a \theta t^*$ and $\min U(\theta t^*) \geq a \theta t^*$.
\end{itemize}
So, comparing \eqref{eqn:lemma4_case1} and \eqref{eqn:lemma4_case2} and noting that $\theta \in [0,1)$, in either case we have that 
\begin{align*}
    \max V(t) - \min U(t) \leq 8at^* + \tfrac{1}{2} \epsilon (q t^*) \leq 8at^* + \tfrac{1}{2} \epsilon t \leq \epsilon t
\end{align*}
for all $t \geq 16 a t^* / \epsilon$.\end{proof}

Finally, we are ready for the proof of Proposition \ref{prop:AFP_converges}, which is essentially identical to the final proof of Theorem 1 in \citet{robinson1951iterative}.
\begin{proof} 
Let $V(0)=0 \in \R^m$, $U(0)=0 \in \R^n$, and $V(t) = t A\ybar_t$, $U(t) = t A^\T \xbar_t$ for $t \in \N$, where $\xbar_t$ and $\ybar_t$ are as given in \eqref{eqn:afp_update}. Clearly, $(U,V)$ forms an iterative play system. It follows from \eqref{eqn:afp_update} that $(U,V)$ is also a PFP-system with $E_V(t) = A \cdot \br^2_A(\xbar_t)$ and $E_U(t) = A^\T \cdot \br^1_A(\ybar_t)$. This is because
\begin{align*}
     \ybar'_t &= \tfrac{t-1}{t} \ybar_{t-1} + \tfrac{1}{t} \br^2_A(\xbar_{t-1}) \text{ implies } \\
     t A \ybar'_t &= (t-1) A \ybar_{t-1} + A\cdot \br^2_A(\xbar_{t-1}) \\
       &= V(t-1) + E_V(t-1),
\end{align*}
so $x_t = \br^1_A(\ybar'_{t-1}) = e_{i(t)}$, where $i(t) \in \arg\max[V(t-1)+E_V(t-1)]$. A similar argument holds for $y_t$.

So, by Lemmas \ref{lemma1} and \ref{lemma4},
\begin{align*}
    \lim_{t \rightarrow \infty} \big( \! \max A \ybar_t - \min \xbar_t^\T A \big) = \lim_{t \rightarrow \infty} \frac{\max V(t) - \min U(t)}{t} = 0,
\end{align*}
where the first equality follows from the definition of $V(t)$ and $U(t)$. Combining this with the fact that, for all $t$, 
\begin{align*}
    \max A \ybar_t &\geq \inf_{y \in \Delta^n} (\max Ay) = v^*, \text{ and} \\
    \min \xbar_t^\T A &\leq \sup_{x \in \Delta^m} (\min x^\T A) = v^*, 
\end{align*}
we have that 
\begin{align*}
    \lim_{t \rightarrow \infty} \max A \ybar_t = \lim_{t \rightarrow \infty} \min \xbar_t^\T A = v^*,
\end{align*}
concluding the proof of convergence of AFP.
\end{proof}

\subsection{Convergence rate of perturbed fictitious play} \label{appendix:fp_worst_case_convergence_rate}
Given a 2p0s matrix game with payoff matrix $A \in \R^{m \times n}$, write $t^*(\epsilon;m,n)$ to denote the value of $t^*$ given by Lemma \ref{lemma4} such that 
\begin{align}
    \max V(t) - \min U(t) < \tfrac{1}{2}\epsilon t \text{ for } t \geq t^*. \label{eqn:conv_rate_proof}
\end{align}

We have that $t^*(\epsilon; 1,1) =1$. So, by the inductive step of the proof of Lemma \ref{lemma4}, $t^*(\epsilon; 2,1) = t^*(\epsilon; 1,2) = \tfrac{8a}{\epsilon}$, which then implies that $t^*(\epsilon;3,1) = t^*(\epsilon;2,2) = t^*(\epsilon;1,3) = (\tfrac{16a}{\epsilon})^2$. Continuing inductively, we see that $t^*(\epsilon;m,n) = (\tfrac{16a}{\epsilon})^{m+n-2}$. Substituting into \eqref{eqn:conv_rate_proof} and rearranging terms gives that
\begin{align*}
    \frac{\max V(t) - \min U(t)}{t} < \epsilon \text{ for } t \geq (\tfrac{8a}{\epsilon})^{m+n-2}.
\end{align*}
Choosing $\epsilon_t = \frac{8a}{t^{1/(m+n-2)}}$ for each $t$ gives the result.


\section[Proofs for transitive and cyclic games]{Proof of Proposition \ref{prop:comparison_on_transitive_and_cyclic}}
\label{appendix:convergence_rate_proofs}

For first two subsections, we restate the definitions of $\Delta_t$, $t \in \N_0$ from \eqref{eqn:delta_sequence_update}: $\Delta_0 = [0, \dots, 0]^\T \in \mathbb{Z}^n$, $\Delta_t = t C^n \, \xbar_t$ for each $t \in \N$, and $i_t$ is the index played by FP (AFP) at time $t$, so
    \begin{align*}
        \Delta_{t+1,j} = \begin{cases}
        \Delta_{t,j}-1 &\text{if } j=i_t-1 \mod n; \\ 
        \Delta_{t,j}+1 &\text{if } j=i_t+1 \mod n; \hspace{1cm} \eqref{eqn:delta_sequence_update}\\
        \Delta_{t,j}  &\text{otherwise;} 
        \end{cases} 
    \end{align*}
for each $t \in \N_0$ and $j \in \{1,\dots, n\}$. 

\subsection[FP: cyclic game]{FP: $C^n$}
We must show that $\max \Delta_t = \Omega_p(\sqrt{t})$ under random tiebreaking. Throughout, whenever performing arithmetic with indices, that arithetic is done modulo $n$. As in the body of the paper, define $t_m = \inf \{ t \in \N_0 : \max \Delta_t = m\}$ and note the Markov inequality bound:
\begin{align*}
    P( \max \Delta_t < m) = P(t_m > t) \leq E(t_m)/t= \frac{1}{t} \sum_{k=0}^{m-1} E(t_{k+1} - t_{k}).
\end{align*}
The bulk of the argument is in finding a bound for $E(t_{k+1} - t_{k})$.

It follows by the definition of $\Delta_t$ and the FP update that $i_t \in \arg\max \Delta_t$. Index $i_1$ may be chosen arbitrarily, but for $t > 1$, it follows from \eqref{eqn:delta_sequence_update} that
\begin{align*}
    i_{(t+1)} \in \begin{cases} 
    \{ i_t \} &\text{if } \Delta_{t,i_t} > \Delta_{t,i_t+1};\\
    \{ i_t, i_t+1 \} &\text{if } \Delta_{t,i_t} = \Delta_{t,i_t+1}; \\
    \{ i_t+1 \} &\text{if } \Delta_{t,i_t} <\Delta_{t,i_t+1};
    \end{cases}
\end{align*}
because the value that is incremented at time $t$ is the value adjacent to index $i_t$, $\Delta_{t,i_t+1}$. Let $\tau_1 = 1$ and inductively define, for $\ell \in \N$, $\tau_{\ell+1} = \inf \{ t > \tau_\ell : \Delta_{t,i_t} = \Delta_{t,i_t+1} \}$ to be the next time at which there are two possible choices for $i_{t+1}$. This is depicted in Table \ref{tab:fp_rps_delta_update}, writing $m = \max \Delta_{\tau_\ell}$ and $m' \in \{m, m+1\}$.

\begin{table}[h]
    \centering
    \caption{The process of incrementing the index played under FP on $C^n$.}
    \begin{tabular}{ccccccccc}
    $\Delta_{\tau_\ell} = $ & $[\dots$ & $\leq0$ &  $m$  & $m$ & $\leq$ 0 & $\leq 0 $ & $\dots] $\\
    & &  \color{lightgray}$\hphantom{-1}\Big\downarrow {-1}$    & $\hphantom{-a} \Big\downarrow {-a}$ &  \color{lightgray}$\hphantom{+1}\Big\downarrow {+1}$   & $\hphantom{+a} \Big\downarrow {+a}$  &  \\
    $\Delta_{\tau_{(\ell+1)}} = $   & $[\dots$    &  $\leq0$     & $m-a$  & $m'$ & $m'$ & $\leq 0 $ & $\dots] $\\
    \end{tabular}
    \label{tab:fp_rps_delta_update}
\end{table}

As shown in the table, all entries of $\Delta_t$ other than the two maximum values must be nonpositive at each $t = \tau_\ell$. This follows by induction, since it holds for $\tau=1$ ($\Delta_{\tau_1}$ has one positive entry) and if it's true for some $\ell \in \N$, then in order to progress to $\tau_{(\ell+1)}$, we must add some number $a > m$ (over the course of $a$ timesteps) to the next entry, which means by \eqref{eqn:delta_sequence_update} we will subtract $a$ from the previous entry $m$, with $m-a \leq 0$. Finally, note that between $\tau_\ell$ and $\tau_{(\ell+1)}$, either we will have incremented the max from $m$ to $m+1$ if $i_{\tau_\ell}=i_{(\tau_\ell-1)}$, or we will not have not (if $i_{\tau_\ell}=i_{(\tau_\ell-1)}+1)$, and the max will remain at $m$ until we repeat some further number of increments of the index played. It is only on these timesteps that the maximum value can increment.

Based on this reasoning, we know that for any $k$, there must exist $\ell(k)$ such that $t_{k} = \tau_{\ell(k)}+1$. Consider the random variable $\ell(k+1) - \ell(k)$, which is the number of increments of the index played that occurred between the increment of the max from $k$ to $k+1$. Under uniform random tiebreaking, we have that $\ell(k+1) - \ell(k) \sim \text{Geometric}(1/2)$, since at each $\tau_\ell$ there is a 1/2 chance of incrementing (``success'') or not incrementing (``failure''). So, $E[\ell(k+1) - \ell(k)] = 2$.
Now suppose that we had the bound $\tau_{\ell+1} - \tau_\ell \leq d \max \Delta_{\tau_\ell}$ for some $d>0$. That would imply that 
\begin{align*}
    t_{k+1} - t_{k} = \tau_{\ell(k+1)} - \tau_{\ell(k)} = \sum_{r=\ell(k)}^{\ell(k+1)-1} \tau_{r+1} - \tau_r &\leq \sum_{r=\ell(k)}^{\ell(k+1)-1} d \max \Delta_{t_{k+1}} \\  &= d [\ell(k+1) - \ell(k)] (k+1).
\end{align*}
Taking the expectation of both sides, we get $E(t_{k+1}-t_k) \leq 2d(k+1)$. Plugging this into the Markov bound is sufficient to finish the argument, as explained in the proof sketch for Proposition \ref{prop:comparison_on_transitive_and_cyclic} (in the paper).

All that remains is to show that $\tau_{\ell+1} - \tau_\ell \leq d \max \Delta_{\tau_\ell} = dm$. From the argument depicted in Table \ref{tab:fp_rps_delta_update}, we know $\tau_{\ell+1} - \tau_\ell \leq a+1$, and that $a \leq m+1 - \min \Delta_{\tau_\ell}$. Because $\Delta_{\tau_\ell}$ has only two positive entries and $\sum_{i=1}^n \Delta_{t,i} = 0$, we have $\min \Delta_{\tau_\ell} \geq -2m$, so $\tau_{\ell+1} - \tau_\ell \leq 3m +2 \leq 5m$, concluding the proof.

\subsection[AFP: cyclic game]{AFP: $C^n$, $n=3,4$}
For $n=3,4$ we show $\max_t \Delta_t < 3$, which proves the result.

Based on the AFP update, it is impossible to get to $\max_t \Delta_t = m+1$ unless there are at least two non-adjacent $m$'s in $\Delta_{t-1}$ with an $m-1$ in between. Otherwise, the two-step nature of the AFP update will not allow an $m$ to be incremented to $m+1$. However, it is impossible to have two non-adjacent $m$'s with an $m-1$ in between for $n=3, m=2$ because the entries of $\Delta_{t-1}$ sum to 0. Furthermore, in the $n=4$ case, for each $t$, it must be that $\Delta_t = [a,b,-a,-b]$ for some $a$ and $b$, by \eqref{eqn:delta_sequence_update}. So there also cannot be three positive numbers in this case.

\subsection[FP: transitive game]{FP: $T^n$}
Assume without loss of generality that $x_1 = e_1$. Let $\tau_k= \min \{t : x_t = e_k\}$ and note that the form of $T^n$ implies that the strategies played by FP must be nondecreasing and increment by at most 1 at a time. We argue by strong induction that $\tau_{k+1} - \tau_k > \tau_{k} - \tau_{k-1}$ for each $k <n$. Checking the first few terms, we have
\begin{align*}
    \tau_1 = 1, \text{ so } &T^n \xbar_1 = n^{-1} [0, n, 0, \dots,0]^\T, \text{ so} \\
    \tau_2 = 2, \text{ so } &T^n \xbar_2 = 2^{-1} n^{-1} [-n, n, n-1, \dots,0]^\T, \text{ so}
\end{align*}
$x_3 = e_2$, and therefore $\tau_3 > 3$, so $\tau_3 - \tau_2 > \tau_2 - \tau_1 = 1$. Now assume that for some fixed $k <n$ and all $k' \in\{1,\dots,k\}$ that $\tau_{k'+1} - \tau_{k'} > \tau_{k'} - \tau_{k'-1}$. Note that 
\begin{align*}
    \xbar_{\tau_{(k+1)}-1} &= [ \tau_2 -\tau_1, \dots, \tau_{k+1} - \tau_k, 0, 0, \dots, 0]^\T, \text{ so,}\\ 
    T^n \xbar_{\tau_{(k+1)}-1} &\propto  [<, <, \dots, <, (\tau_{k+1} - \tau_{k})(n-k+1), 0, \dots, 0]^\T, \text{ and} \\
    T^n \xbar_{\ell} &\propto  [<, <, \dots, <, (\tau_{k+1} - \tau_{k}) (n-k+1), (\ell-\tau_{k+1}+1) (n-k), \dots, 0]^\T,
\end{align*}
for $\ell \in \{\tau_{k+1}, \dots,  \tau_{k+2}-1\}$, where the `$<$' signs indicate values that are no greater than their neighbors on the right; this holds by the inductive hypothesis and definition of $T^n$. We know that for steps $\tau_{k+1}, \dots \tau_{k+2}-1$, FP will play $e_{k+1}$, and we know that $\tau_{k+2}$ is the first timestep at which $(\tau_{k+2} - \tau_{k+1})(n-k) \geq (\tau_{k+1} - \tau_{k})(n-k+1)$ (or else FP would have played $k+1$ at $\tau_{k+2}$, a contradiction). It follows that $\tau_{k+2} -\tau_{k+1} > \tau_{k+1}-\tau_k$, as desired.

This result implies that $\tau_{k+1} - \tau_k \geq k$ for each $k <n$, so we have $\tau_k \geq \sum_{j=1}^k j = k(k+1)/2 \geq k^2/2$ for each $k$. Inverting this, we get that $t \mapsto \sqrt{2t}$ is an upper bound on $k(t)$, the index played by FP at time $t$. Combining the expression for $T^n \xbar_\ell$, $\ell \in \{\tau_{k+1}, \dots,  \tau_{k+2}-1\}$, with this bound, we get that $\max \, T^n \xbar_t = n^{-1} t^{-1} (\tau_{k(t)+1} - \tau_{k(t)})(n-k(t)+1) \geq n^{-1} t^{-1} (n - k(t)+1) \geq n^{-1} t^{-1} (n-\sqrt{2t}) = \Omega(1/\sqrt{t})$.

\subsection[AFP: transitive game]{AFP: $T^n$}
We argue first by strong induction that $x_t = e_{\min(t, n)}$ for all $t$. Assume without loss of generality that $x_1 = e_1$. Now assume that, for some fixed $\tau$, $x_t = e_{\min(t,n)}$ for $t \leq \tau$. 

If $\tau < n$, then under the inductive hypothesis, 
\begin{align*}
    T^n \xbar_\tau &= \tau^{-1} T^n \; [\overbrace{1, 1, \dots, 1}^{\tau}, \; 0, \dots, 0]^\T \\
    &= \tau^{-1} n^{-1} [-n, 1, \dots, (n-\tau+2), \; (n-\tau+1), \dots, 0]^\T,
\end{align*}
for which the largest value is at index $\tau$, so $x' = e_\tau$, so then $T^n \xbar'_\tau$ will have largest value $2n^{-1} (n-\tau+1)$ at index $\tau+1$, so $x_{\tau+1} = e_{\tau+1}$. If $\tau \geq n$, then under the inductive hypothesis, $T^n \xbar_\tau = \tau^{-1} T^n [1, 1, \dots, 1, \tau-n+1]^\T = \tau^{-1} n^{-1} [-n, 1, 1, \dots, 1, 1-2(\tau-n), 2]^\T$, at which point $x'_\tau = x_{\tau+1} = e_n$, as desired. 

Finally, we are interested in $\max T^n \xbar_t$ for $t < n$, which we obtain from the calculation above, $\frac{n-t+2}{nt} = O(1/t)$.

\clearpage
\section{Additional figures}
\label{appendix:additional_figures}

\begin{figure}[h]
    \centering
    \includegraphics[width=0.47\textwidth]{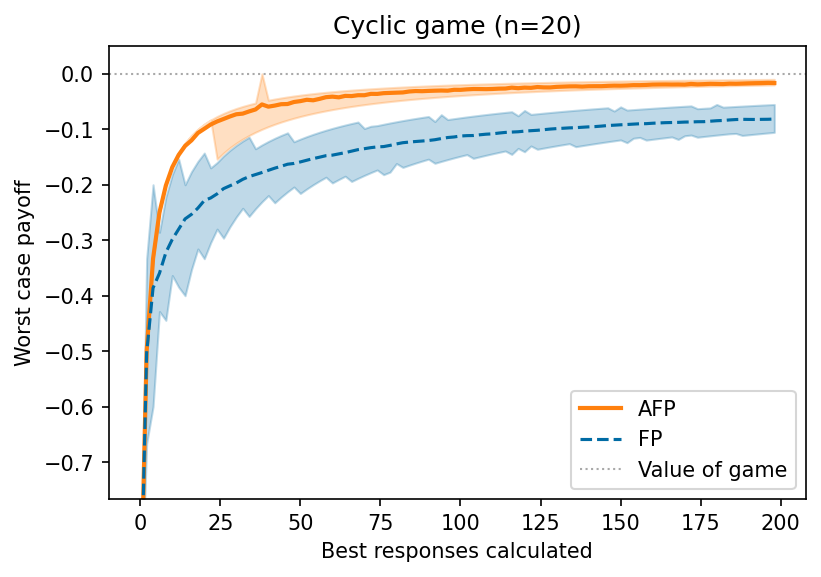}
    \includegraphics[width=0.47\textwidth]{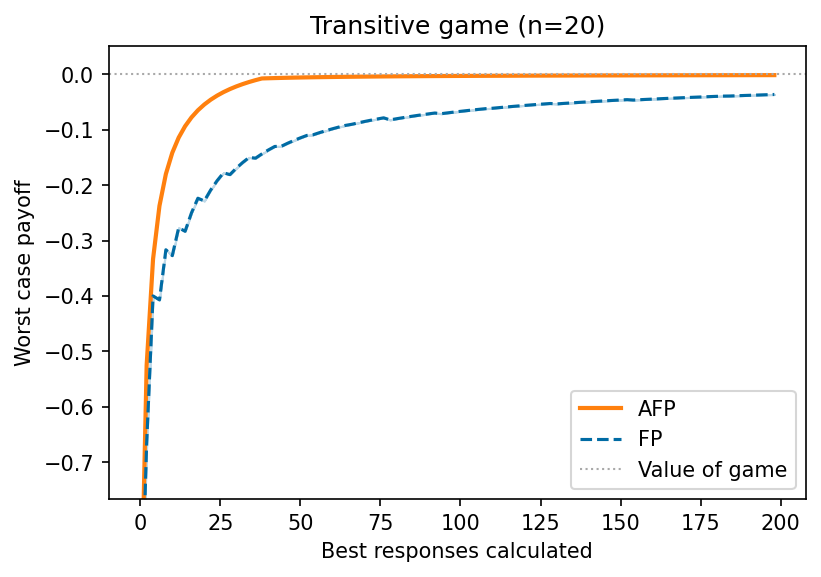}
    \caption{Comparisons of FP and AFP on $C^{20}$ and $T^{20}$ with random tiebreaking. As before, highlighted regions indicate 10th and 90th percentiles across 10,000 runs.}
\end{figure}

\begin{figure}[h]
\centering
    \includegraphics[width=0.47\textwidth]{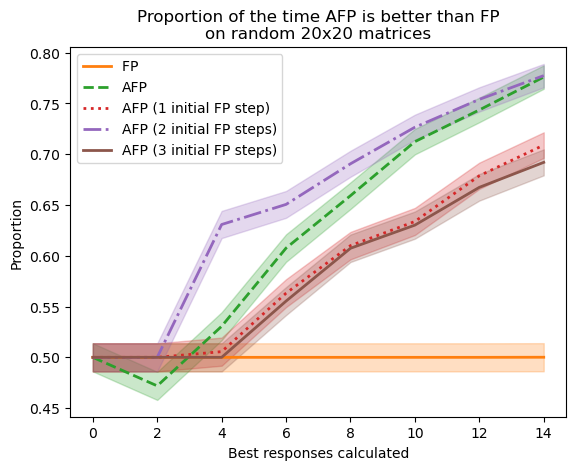}
    \caption{Comparison of FP with versions of AFP initialized with different numbers of steps of FP, based on worst case performance, as in Figure \ref{fig:proportion_afp_better_fp}. When there is an exact tie, credit is split evenly, resulting in a solid line at 0.5 for FP compared with itself.}
\end{figure}

\clearpage
\section{RL experiment hyperparameters} \label{appendix:rl_experiment_hyperparameters}

    \begin{table}[h]
    \begin{tabular}{ccc}
    \toprule
   {\bf APPO} & TinyFighter & RWS  \\ \midrule
       Discount factor $\gamma$ & 0.99  & 0.995 \\
      Value function coefficient &  0.5 & 0.5 \\ 
      Entropy coefficient & 0.01 & 0.02 \\
      Learning rate & 3e-4 & 1.5e-4 \\ 
      Gradient clipping & 10  & 80 \\
      Batch size & 5,120 & 6,000 \\
      Number of workers & 80 & 319 \\
      Rollout fragment length & 64 & 64 \\
      Other &  \multicolumn{2}{c}{\em Ray defaults (as of Nov 2022) \citep{moritz2018ray}} 
       \\
      \bottomrule
    \end{tabular}
    \end{table}
    
    \begin{table}[h]
    \begin{tabular}{cp{12cm}}
    \toprule
    \multicolumn{2}{c}{\bf Running With Scissors Environment} \\ \midrule
    Reward shaping & Policy $t=1$ played against random and received a unit of reward for each `scissors' picked up and a negative unit of reward for each `rock' and `paper' picked up. \\
      Reward scaling ($t=1$) & 10 \\
      Reward scaling ($t>1$) & 100 \\
      Sprite size & 3x3 pixels \\
      Agent field of view & 5x5 grid units (15x15 pixels)   \\
      Agent memory & 4 frames \\
      \bottomrule
    \end{tabular} 
    \end{table}

\begin{table}[ht]
    \caption{Population neural network $\Pi_\theta(t)$ architecture for TinyFighter. The neural network takes as inputs: an image array, the ratios of rock, paper, and scissors in the player's inventory, and an agent index $t$ which gets mapped to a vector in $R^{128}$ by an embedding layer. 
    The Shape column does not include batch dimension. We use ReLU activation functions except for the output layers, which use the identity function (`linear'). We used the Keras library \citep{chollet2015keras} to implement the model.} 
    \scriptsize
    \centering
\begin{tabular}{lllll}
\toprule
                Name &         Type &  Shape & \# Parameters &                                              Input \\
\midrule
               state &   InputLayer &   [13] &            0 &                                                    \\
               dense &        Dense &  [128] &        1,792 &                                              state \\
             dense\_1 &        Dense &  [128] &       16,512 &                                              dense \\
 neupl\_opponent\_dist &   InputLayer &    [4] &            0 &                                                    \\
  policy\_head\_inputs &  Concatenate &  [145] &            0 &                state, dense\_1, neupl\_opponent\_dist \\
  neupl\_opponent\_idx &   InputLayer &    [5] &            0 &                                                    \\
             dense\_2 &        Dense &  [256] &       37,376 &                                 policy\_head\_inputs \\
   value\_head\_inputs &  Concatenate &  [150] &            0 &  state, dense\_1, neupl\_opponent\_dist, neupl\_opp... \\
             dense\_3 &        Dense &  [128] &       32,896 &                                            dense\_2 \\
             dense\_6 &        Dense &  [256] &       38,656 &                                  value\_head\_inputs \\
             dense\_4 &        Dense &   [64] &        8,256 &                                            dense\_3 \\
         action\_mask &   InputLayer &    [4] &            0 &                                                    \\
             dense\_7 &        Dense &  [128] &       32,896 &                                            dense\_6 \\
             dense\_5 &        Dense &    [4] &          260 &                                            dense\_4 \\
              lambda &       Lambda &    [4] &            0 &                                        action\_mask \\
             dense\_8 &        Dense &   [64] &        8,256 &                                            dense\_7 \\
          policy\_out &          Add &    [4] &            0 &                                    dense\_5, lambda \\
           value\_out &        Dense &    [1] &           65 &                                            dense\_8 \\
\bottomrule
\end{tabular}

\end{table}

    \begin{table}[ht]
    \caption{Population neural network $\Pi_\theta(t)$ architecture for RWS. The neural network takes as inputs: an image array, the ratios of rock, paper, and scissors in the player's inventory, and an agent index $t$ which gets mapped to a vector in $R^{128}$ by an embedding layer. 
    The Shape column does not include batch dimension. We use ReLU activation functions except for the output layers, which use the identity function (`linear'). We used the Keras library \citep{chollet2015keras} to implement the model.} 
    \scriptsize
    \centering
\begin{tabular}{lllll}
\toprule
                Name &         Type &           Shape & \# Parameters &                                          Input \\
\midrule
         rgb\_history &   InputLayer &  [4, 15, 15, 3] &            0 &                                                \\
           rescaling &    Rescaling &  [4, 15, 15, 3] &            0 &                                    rgb\_history \\
              conv2d &       Conv2D &   [4, 5, 5, 32] &          896 &                                      rescaling \\
          activation &   Activation &   [4, 5, 5, 32] &            0 &                                         conv2d \\
            conv2d\_1 &       Conv2D &   [4, 5, 5, 16] &          528 &                                     activation \\
        activation\_1 &   Activation &   [4, 5, 5, 16] &            0 &                                       conv2d\_1 \\
             reshape &      Reshape &        [4, 400] &            0 &                                   activation\_1 \\
  inv\_ratios\_history &   InputLayer &          [4, 3] &            0 &                                                \\
         concatenate &  Concatenate &        [4, 403] &            0 &                    reshape, inv\_ratios\_history \\
               dense &        Dense &        [4, 256] &      103,424 &                                    concatenate \\
             dense\_1 &        Dense &        [4, 256] &       65,792 &                                          dense \\
                lstm &         LSTM &           [256] &      525,312 &                                        dense\_1 \\
 neupl\_opponent\_dist &   InputLayer &            [11] &            0 &                                                \\
  neupl\_opponent\_idx &   InputLayer &            [12] &            0 &                                                \\
  policy\_head\_inputs &  Concatenate &           [267] &            0 &                      lstm, neupl\_opponent\_dist \\
   value\_head\_inputs &  Concatenate &           [279] &            0 &  lstm, neupl\_opponent\_dist, neupl\_opponent\_idx \\
            policy\_1 &        Dense &           [512] &      137,216 &                             policy\_head\_inputs \\
             value\_1 &        Dense &           [512] &      143,360 &                              value\_head\_inputs \\
            policy\_2 &        Dense &           [256] &      131,328 &                                       policy\_1 \\
             value\_2 &        Dense &           [256] &      131,328 &                                        value\_1 \\
            policy\_3 &        Dense &           [128] &       32,896 &                                       policy\_2 \\
             value\_3 &        Dense &           [128] &       32,896 &                                        value\_2 \\
          policy\_out &        Dense &            [30] &        3,870 &                                       policy\_3 \\
           value\_out &        Dense &             [1] &          129 &                                        value\_3 \\
\bottomrule
\end{tabular}
\end{table}

\clearpage
\section{Vanilla RL implementations of FP and AFP} \label{appendix:vanilla_rl_fp_afp}
Algorithm \ref{alg:fp_afp_rl} gives a simple reinforcement learning implementation of FP and AFP that does not rely on Neural Population Learning or the reservoir buffer sampling of \citet{heinrich2015fictitious}. Notes:
\begin{itemize}
    \item $-i$ refers to the opponent of $i$.
    \item $\textsc{ReinforcementLearning}(\pi, \mu)$ plays $\pi$ against $\mu$ for some number of episodes, gathers data, and updates $\pi$ by reinforcement learning.
    \item Lines \ref{alg:sample} and \ref{alg:train} are the approximate reinforcement learning analogue to computing a best response to the average of all previous strategies. For details, see \citet{heinrich2015fictitious}, which uses a more complicated setup in order to limit storage requirements (constantly learning $\beta$ by supervised learning).
\end{itemize}

\begin{algorithm}[ht]
\caption{FP/AFP with reinforcement learning}
\label{alg:fp_afp_rl}
\begin{algorithmic}[1]
\State Choose setting: \texttt{FP} or \texttt{AFP}.
\State Initialize policies $\pi^1_1$ and $\pi^2_1$.
\State Initialize policy stores $\Pi^1 = \{\pi^1_1\}, \Pi^2 = \{\pi^2_1\}$.
\For{$t=1,2,\dots$}
    \State Initialize $\pi^1_{t+1}, \pi^2_{t+1}$.
    \For{$i=1,2$} 
       \While{per-timestep compute budget remains}
            \State Sample $\beta \sim \text{Uniform}(\Pi^{-i})$. \label{alg:sample}
            \State $\pi^i_{t+1} \gets \textsc{ReinforcementLearning}(\pi^i_{t+1}, \beta)$. \label{alg:train}
        \EndWhile
        \State $\Pi^i \gets \Pi^i \cup \{ \pi^i_{t+1} \}$.
    \EndFor
        \If{setting is \texttt{AFP} {\bf and} $t$ is odd}
        \State $\Pi^1 \gets \Pi^1 \setminus \{ \pi^1_{t} \}$ and $\Pi^2 \gets \Pi^2 \setminus \{ \pi^2_{t} \}$.
    \EndIf
\EndFor
\end{algorithmic}

\end{algorithm}

\end{document}